\renewcommand\footnotetextcopyrightpermission[1]{}
\begin{document}

\title{Interpretable Transformer Hawkes Processes: Unveiling Complex Interactions in Social Networks}





\begin{abstract}
Social networks represent complex ecosystems where the interactions between users or groups play a pivotal role in information dissemination, opinion formation, and social interactions.
Effectively harnessing event sequence data within social networks to unearth interactions among users or groups has persistently posed a challenging frontier within the realm of point processes.
Current deep point process models face inherent limitations within the context of social networks, constraining both their interpretability and expressive power. These models encounter challenges in capturing interactions among users or groups and often rely on parameterized extrapolation methods when modeling intensity over non-event intervals, limiting their capacity to capture intricate intensity patterns, particularly beyond observed events.
To address these challenges, this study proposes modifications to Transformer Hawkes processes (THP), leading to the development of interpretable Transformer Hawkes processes (ITHP). ITHP inherits the strengths of THP while aligning with statistical nonlinear Hawkes processes, thereby enhancing its interpretability and providing valuable insights into interactions between users or groups. Additionally, ITHP enhances the flexibility of the intensity function over non-event intervals, making it better suited to capture complex event propagation patterns in social networks.
Experimental results, both on synthetic and real data, demonstrate the effectiveness of ITHP in overcoming the identified limitations. Moreover, they highlight ITHP's applicability in the context of exploring the complex impact of users or groups within social networks. 
Our code is available at \url{https://github.com/waystogetthere/Interpretable-Transformer-Hawkes-Process.git}. 
\end{abstract}
\author{Zizhuo Meng$^{*}$}
\affiliation{%
  \institution{University of Technology Sydney}
  \city{Sydney}
  \country{Australia}
}
\email{Zizhuo.Meng@student.uts.edu.au}

\author{Ke Wan$^{*}$}
\affiliation{%
  \institution{University of Illinois Urbana-Champaign}
  \city{Illinois}
  \country{United States}
}
\email{kewan2@illinois.edu}

\author{Yadong Huang}
\affiliation{%
  \institution{Zhoushan Academy of Marine Data Science}
  \city{Zhoushan}
  \country{China}
}
\email{huang-yd14@tsinghua.org.cn}

\author{Zhidong Li}
\affiliation{%
  \institution{University of Technology Sydney}
  \city{Sydney}
  \country{Australia}
}
\email{Zhidong.Li@uts.edu.au}

\author{Yang Wang}
\affiliation{%
  \institution{University of Technology Sydney}
  \city{Sydney}
  \country{Australia}
}
\email{Yang.Wang@uts.edu.au}

\author{Feng Zhou\textsuperscript{\S}}
\affiliation{%
  \institution{Center for Applied Statistics and School of Statistics, Renmin University of China}
  \institution{Beijing Adv. Innov. Ctr. for Future Blockchain and Privacy Computing}
  \city{Beijing}
  \country{China}
}
\email{feng.zhou@ruc.edu.cn}

\maketitle
\def\thefootnote{*}\footnotetext{Equal contributions.}
\def\thefootnote{\S}\footnotetext{Corresponding author.}
\renewcommand\thefootnote{\arabic{footnote}}

\section{Introduction}
Event sequences are pervasive in social networks~\citep{zhang2022counterfactual,kong2023interval}, including platforms such as Stack Overflow, Amazon, and Taobao. Understanding and mining these event sequences to uncover interactions between different users or groups within social networks is a critical research topic~\citep{zipkin2016point,farajtabar2015coevolve}. This analysis can help identify influential users, user groups, and trending topics, offering practical insights for platform optimization and user engagement strategies~\citep{zhao2015seismic,zhou2013learning}. 
For instance, consider the Stack Overflow platform, where developers ask and answer questions related to programming. Event sequences in this context could consist of events such as question postings, answers, comments, and votes. Analyzing this data can reveal insights into user interactions. 

Temporal point processes (TPP)~\citep{daley2003introduction} play a fundamental role in modeling event sequences. The Poisson process~\citep{daley2007introduction}, a basic temporal point process, assumes that events occur uniformly and independently over time. Besides, the Hawkes process~\citep{hawkes1971spectra} is an extension of the Poisson process that allows for event dependencies. 
While these models have been useful in many scenarios, they may not always capture the complexities present in real-world event sequences, which often exhibit more intricate dependencies and interactions.
Therefore, more sophisticated and flexible models are needed.

With the advancement of deep learning, deep architectures have demonstrated remarkable performance in modeling sequence data. 
For example, models utilizing either vanilla RNN~\citep{du2016recurrent} or long short-term memory (LSTM) networks~\citep{mei2017neural} have exhibited improved likelihood fitting and event prediction compared to earlier parameterized models. 
Moreover, models relying on transformer architectures or self-attention mechanisms~\citep{zuo2020transformer,zhang2020self} have shown even better performance. These deep learning approaches have opened up new possibilities for effectively capturing intricate patterns within event sequences, enhancing the overall predictive accuracy and efficiency in various applications. 

However, current deep point process models still have some inherent limitations, which restrict the interpretability and expressive power. 
Firstly, 
such models are unable to explicitly capture the interactions between different event types. 
Deep point process models often model interactions between event types implicitly, which may hinder their interpretability due to the lack of explicit representation for these interactions. 
Understanding the interactions between different event types is crucial in social networks. 
For example, on Amazon, event sequences encompass a wide range of user activities, which can be considered events of various types, including product searches, purchases, reviews, and recommendations. Analyzing the interactions among these types can yield valuable insights into user-level and product-level interactions, providing Amazon with strategic advantages.
Secondly, most existing deep point process models only perform encoding for the positions where events have occurred. For non-event intervals, intensity functions are modelled using parameterized extrapolation methods. For examples, Eq. (11) in~\citet{du2016recurrent}, Eq. (7) in~\citet{mei2017neural}, and Eq. (6) in~\citet{zuo2020transformer}. This approach introduces a parameterized assumption, which restricts the model's expressive power. 

In order to address the aforementioned issues, 
we propose a novel interpretable TPP model based on Transformer Hawkes processes (THP). The proposed model aligns THP perfectly with the statistical nonlinear Hawkes processes, greatly enhancing the interpretability. Thus, we refer to this enhanced model as \emph{interpretable Transformer Hawkes processes (ITHP)}. In ITHP, the attention mechanism's product of the historical event's key and the subsequent event's query corresponds precisely to a time-varying trigger kernel in the statistical nonlinear Hawkes processes. 
By establishing a clear correspondence with statistical Hawkes processes, ITHP offers valuable insights into the interactions between different event types. This advancement is significant for enhancing the interpretability of THP in social network applications. 
Meanwhile, for the intensity function over non-event intervals, we do not adopt a simple parameterized extrapolation method. Instead, we utilize a ``fully attention mechanism'' to express the conditional intensity function at any position. This improvement increases the flexibility of the intensity function over non-event intervals, consequently elevating the model's expressive power. 
Specifically, our contributions are as follows: 
\begin{itemize}
    \item ITHP explicitly captures interactions between event types, providing insights into interactions and improving model interpretability; 
    \item ITHP's fully attention mechanism for the conditional intensity function over non-event intervals enhances model flexibility, allowing it to capture complex intensity patterns beyond the observed events; 
    \item ITHP is validated with synthetic and real social network data, demonstrating its superior ability to interpret event interactions and outperform alternatives in expressiveness. 
\end{itemize}

\section{Related Work}
Enhancing the expressive power of point process models has long been a challenging endeavor. Currently, mainstream approaches fall into two categories.
The first approach entails the utilization of statistical non-parametric methods to augment their expressive capacity. For instance, methodologies grounded in both frequentist and Bayesian nonparametric paradigms are employed to model the intensity function of point processes~\citep{lewis2011nonparametric,zhou2013learning,lloyd2015variational,donner2018efficient,zhou2020auxiliary,zhou2021efficient,pan2021self}. 
The second significant category is deep point process models. These models harness the capabilities of deep learning architectures to infer the intensity function from data,
including RNNs~\citep{du2016recurrent}, LSTM~\citep{mei2017neural,xiao2017modeling}, Transformers~\citep{zuo2020transformer,zhang2020self,zhang2022temporal}, normalizing flow~\citep{shchur2020fast}, adversarial learning~\citep{xiao2017wasserstein, NEURIPS2022_9d3faa41}, reinforcement learning~\citep{upadhyay2018deep}, deep kernel~\citep{okawa2019deep,zhu2021deep,dong2022spatio}, and intensity-free frameworks~\citep{shchur2019intensity}. These architectural choices empower the modeling of temporal dynamics within event sequences and unveil the underlying patterns. 
However, in contrast to statistical point process models, the enhanced expressive power of deep point process models comes at the cost of losing interpretability, rendering deep point process models akin to ``black-box'' constructs. 
To the best of our knowledge, there has been limited exploration into explicitly capturing interactions between event types and enhancing the interpretability of deep point process models~\citep{zhou2023automatic,wei2023granger}. This paper introduces an innovative attention-based ITHP model, whose intensity function aligns seamlessly with statistical nonlinear Hawkes processes, substantially enhancing the interpretability. Our work serves as a catalyst for advancing the interpretability of deep point process models, greatly promoting their utility in uncovering interactions between different users or groups within social networks.

\section{Preliminary Knowledge}
In this section, we provide some background knowledge on some relevant key concepts. 

\subsection{Hawkes Process}
The multivariate Hawkes process~\citep{hawkes1971spectra} is a widely used temporal point process model for capturing interactions among multiple event types. 
The key feature of the multivariate Hawkes process lies in its conditional intensity function. The conditional intensity function $\lambda_k(t|\mathcal{H}_t)$ for event type $k$ at time $t$ is defined as the instantaneous event rate conditioned on the historical information $\mathcal{H}_t = \{(t_i, k_i) | t_i < t\}$: 
\begin{equation*}
    \label{Eq: Hawkes Process} \lambda_k(t|\mathcal{H}_t)=\mu_k+\sum_{t_i<t}\phi_{k,k_i}(t-t_i), 
\end{equation*}
where $\mu_k$ is the base rate for event type $k$ and $\phi_{k,k_i}(t-t_i)$ is the trigger kernel representing the excitation effect from event $t_i$ with type $k_i$ to $t$ with type $k$. It expresses the expected number of occurrences of event type $k$ at time $t$ given the past history of events. 

The interpretability of Hawkes processes stems from its explicit representation of event dependencies through the trigger kernel. The model allows us to quantify the impact of past events with different event types on the occurrence of a specific event, providing insights into the interactions between event types. As a result, the multivariate Hawkes process serves as a powerful tool in social network applications where understanding the interactions between event types (users or groups) is of utmost importance. 

\subsection{Nonlinear Hawkes Process}
\label{nonlinear_hawkes}
In contrast to the original Hawkes process, which assumes only non-negative trigger kernels (excitatory interactions) between events to avoid generating negative intensities, the nonlinear Hawkes process~\citep{bremaud1996stability} offers a more flexible modeling framework by incorporating both excitatory and inhibitory effects among events. 
In the nonlinear Hawkes process, the conditional intensity function for event type $k$ at time $t$ is defined as: 
\begin{equation*}
    \label{Eq: Non-Linear Parametric Hawkes}
    \lambda_k(t|\mathcal{H}_t)=\sigma\left(\mu_k+ \sum_{t_i<t}\phi_{k,k_i}(t-t_i)\right), 
\end{equation*}
where $\sigma(\cdot)$ is a nonlinear mapping from $\mathbb{R}$ to $\mathbb{R}^+$, ensuring the non-negativity of the intensity. Hence this trigger kernel can be positive (excitatory) or negative (inhibitory), thus enabling the modeling of complex interactions between different event types. 

In the aforementioned models, the trigger kernel depends solely on the \textbf{relative time} $t-t_i$, implying that the trigger kernel is shift-invariant. However, in dynamic Hawkes process models~\citep{zhou2020fast,zhou2020auxiliary,bhaduri2021change,zhou2022efficient}, the trigger kernel is further extended to vary with \textbf{absolute time}, denoted as $\phi(t-t_i,t_i)$. By incorporating the absolute time, the trigger kernel becomes capable of capturing time-varying patterns, offering the model more degrees of freedom in its representation. 

\subsection{Transformer Hawkes Process}
Our work is built upon THP~\citep{zuo2020transformer}, so we concisely introduce the framework of THP here. 
Given a sequence $\mathcal{S} = \{(t_i, k_i)\}_{i=1}^{L}$ where each event is characterized by a timestamp $t_i$ and an event type $k_i$, THP leverages two types of embeddings, namely temporal embedding and event type embedding, to represent these two kinds of information.
To encode event timestamps, THP represents each timestamp $t_i$ using an embedding vector $\mathbf{z}(t_i)\in \mathbb{R}^{M}$:
\begin{equation*}
z_j(t_i)=
\begin{cases}
\cos(t_i/10000^\frac{j-1}{M} ) \text{ if } j \text{ is odd},\\
\sin(t_i/10000^\frac{j}{M} ) \text{ if } j \text{ is even}, 
\end{cases}
\end{equation*}
where $z_j(t_i)$ is the $j$-th entry of $\mathbf{z}(t_i)$ and $j=0,\ldots,M-1$. 
The collection of time embeddings is represented as $\mathbf{Z}=\left[\mathbf{z}(t_1), \ldots, \mathbf{z}(t_L)\right]^\top \in \mathbb{R}^{L \times M}$.
For encoding event types, the model utilizes a learnable matrix $\mathbf{U} \in \mathbb{R}^{M \times K}$, where $K$ is the number of event types. For each event type $k_i$, its embedding $\mathbf{e}(k_i)$ is computed as:
\begin{equation*}
   \mathbf{e}(k_i)=\mathbf{U} \mathbf{y}_i \in \mathbb{R}^M,  
\end{equation*}
where $\mathbf{y}_i$ is the one-hot encoding of the event type $k_i$. The collection of type embeddings is $\mathbf{E}=[\mathbf{e}(k_1), \ldots, \mathbf{e}(k_L)]^\top \in \mathbb{R}^{L \times M}$. 
The final embedding is the summation of the temporal and event type embeddings: 
\begin{equation}
    \mathbf{X}=\mathbf{Z}+\mathbf{E}\in\mathbb{R}^{L \times M}, 
    \label{feature_x}
\end{equation}
where each row of $\mathbf{X}$ represents the complete embedding of a single event in the sequence $\mathcal{S}$. 

After embedding, the model focuses on learning the dependence among events using self-attention mechanism. The attention output $\mathbf{S}$ is computed as: 
\begin{gather*}
\label{eq:transformer attention layer}
\mathbf{S} = \text{softmax}\left(
\frac{{\mathbf{Q} \mathbf{K}}^\top}{\sqrt{M_K}}\right)\mathbf{V} \in \mathbb{R}^{L \times M_V}, \\
\mathbf{Q} = \mathbf{XW}^Q \in \mathbb{R}^{L \times M_K}, \mathbf{K} = \mathbf{XW}^K \in \mathbb{R}^{L \times M_K}, \mathbf{V} = \mathbf{XW}^V \in \mathbb{R}^{L \times M_V},
\end{gather*}
where $\mathbf{Q}$, $\mathbf{K}$, $\mathbf{V}$ are the query, key and value matrices. 
Matrices $\mathbf{W}^Q \in \mathbb{R}^{M \times M_K}$, $\mathbf{W}^K \in \mathbb{R}^{M \times M_K}$ and $\mathbf{W}^V \in \mathbb{R}^{M \times M_V}$ are the learnable parameters. 
To preserve causality and prevent future events from influencing past events, we mask out the entries in the upper triangular region of $\mathbf{QK}^\top$. 

Finally, the attention output $\mathbf{S}$ is passed through a two-layer MLP to produce the hidden state $\mathbf{H}$: 
\begin{equation*}
\label{eq: transformer hidden}
\mathbf{H} = \text{ReLU}(\mathbf{SW}_1 + \mathbf{b}_1)\mathbf{W}_2 + \mathbf{b}_2 \in \mathbb{R}^{L \times M},
\end{equation*}
where $\mathbf{W}_1\in\mathbb{R}^{M \times M_H}$, $\mathbf{W}_2\in\mathbb{R}^{M_H \times M}$, $\mathbf{b}_1\in\mathbb{R}^{M_H}$ and $\mathbf{b}_2\in\mathbb{R}^{M}$ are the learnable parameters. 
The $k$-type conditional intensity function of THP is designed as: 
\begin{equation}
\lambda_k(t|\mathcal{H}_t)=\text{softplus}\left(\colorbox{red!40}{$\alpha_k(t-t_i)/t_i$}+\mathbf{w}_k^\top\mathbf{h}(t_i)+b_k\right), 
    \label{thp_lamda}
\end{equation}
where $t_i$ is the last event before $t$, $\alpha_k,\mathbf{w}_k,b_k$ are learnable parameters, the nonlinear function is chosen to be softplus to ensure that the intensity is non-negative, $\mathbf{h}(t_i)$ is the transpose of the $i$-th row of $\mathbf{H}$ expressing the historical impact on event $t_i$.

\section{Interpretable Transformer Hawkes Processes}
As mentioned earlier, THP has two prominent limitations:
(1) THP implicitly model the dependency between events, which hinders the explicit representation of interactions between different event types and makes it challenging to understand the interactions among event types. (2) Like many other deep point process models, THP applies attention encoding only to the event occurrence positions, while using parameterized extrapolation methods to model the intensity on non-event intervals (the red term in \cref{thp_lamda}). This approach introduces a parameterized assumption restricting the model's expressive power. 





To enhance the model's interpretability and expressiveness, our work introduces modifications to the THP model. \textbf{Specifically, we make modifications to (1) the event embedding, (2) the attention module and (3) the conditional intensity function in THP.}
Interestingly, the modified THP corresponds perfectly to the statistical nonlinear Hawkes processes. 
This leads to significantly improved interpretability and a better characterization of the interactions between event types.
Additionally, new design of the conditional intensity function can avoid the restrictions imposed by parameterized extrapolation, enabling the model to effectively capture complex intensity patterns beyond the observed events. 

In following sections, we outline the step-by-step process of modifying THP to achieve the aforementioned goals. For each modification, we provide theoretical proofs to demonstrate the rationality and validity of the respective changes.


\subsection{Modified Event Embedding}
In ITHP, we maintain the same temporal embedding and event type embedding methods as in THP.
However, our modification lies in replacing the summation operation in \cref{feature_x} with concatenation:
\begin{equation}
\mathbf{X} = \mathbf{Z} + \mathbf{E} \in \mathbb{R}^{L \times M} \Rightarrow \mathbf{X} = [\mathbf{Z}, \mathbf{E}] \in \mathbb{R}^{L \times 2M}.
\label{new_feature_x}
\end{equation}

The reason for this modification is that the original summation operation introduces a similarity between timestamps and event types (or vice versa) of the preceding and succeeding events. 
However, in statistical Hawkes processes, known for their interpretability, the interaction between two events is the magnitude of a kernel determined by the similarity (correlation) between their types and the similarity (distance) between their timestamps. No cross-similarity is introduced.
To maintain a similar level of interpretability, we replace summation with concatenation here. 

\begin{theorem}
In \cref{new_feature_x}, the concatenation operation enables us to explicitly capture the desired temporal and event type similarities, while simultaneously avoiding any cross-similarities between timestamps and event types. 
\end{theorem}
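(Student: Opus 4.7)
The plan is to interpret ``similarity'' between events $i$ and $j$ as the bilinear form $B(\mathbf{x}_i, \mathbf{x}_j) = \mathbf{x}_i^\top \mathbf{W}^Q (\mathbf{W}^K)^\top \mathbf{x}_j$ that appears inside the softmax of the attention module, and compare how this form decomposes under summation versus concatenation of the temporal and type embeddings. A \emph{pure temporal similarity} is a term depending only on $\mathbf{z}(t_i)$ and $\mathbf{z}(t_j)$, a \emph{pure type similarity} depends only on $\mathbf{e}(k_i)$ and $\mathbf{e}(k_j)$, and a \emph{cross-similarity} couples a timestamp of one event with the type of the other. The claim is that concatenation admits a parameterization realizing the first two while eliminating the third, whereas summation cannot.

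First, I would expand $B$ under the summation scheme $\mathbf{x}_i = \mathbf{z}(t_i) + \mathbf{e}(k_i)$. Writing $\mathbf{A} := \mathbf{W}^Q (\mathbf{W}^K)^\top$, the bilinear form splits as $\mathbf{z}(t_i)^\top \mathbf{A} \mathbf{z}(t_j) + \mathbf{e}(k_i)^\top \mathbf{A} \mathbf{e}(k_j) + \mathbf{z}(t_i)^\top \mathbf{A} \mathbf{e}(k_j) + \mathbf{e}(k_i)^\top \mathbf{A} \mathbf{z}(t_j)$. Because the same $\mathbf{A}$ governs all four terms, annihilating the two cross terms forces $\mathbf{A} = 0$, which also destroys the pure temporal and type terms. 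Hence summation structurally entangles timestamps and event types.

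Next, under the concatenation scheme $\mathbf{x}_i = [\mathbf{z}(t_i);\, \mathbf{e}(k_i)] \in \mathbb{R}^{2M}$, I would block-decompose $\mathbf{W}^Q$ and $\mathbf{W}^K$ into upper and lower halves aligned with the $\mathbf{Z}$ and $\mathbf{E}$ partitions. The bilinear form then factors into four blocks governed by \emph{independent} submatrices $\mathbf{A}_{ZZ}, \mathbf{A}_{EE}, \mathbf{A}_{ZE}, \mathbf{A}_{EZ}$, built from disjoint parameter blocks. Setting $\mathbf{A}_{ZE} = \mathbf{A}_{EZ} = 0$ (e.g., by a block-diagonal choice of $\mathbf{W}^Q$ or $\mathbf{W}^K$) kills both cross terms while leaving $\mathbf{A}_{ZZ}$ and $\mathbf{A}_{EE}$ free to encode arbitrary temporal and type similarities. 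The resulting attention score is a clean sum of a time-only inner product and a type-only inner product, mirroring the factorized trigger kernel $\phi_{k,k_i}(t - t_i)$ of the statistical nonlinear Hawkes process.

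The main obstacle is making ``enable'' precise: the statement is about the representational capacity of the embedding plus attention weights, not about any particular trained weight. Once this is framed as a containment between parameter families, the proof reduces to exhibiting the block-diagonal witness above and verifying by direct substitution that the induced bilinear form has no timestamp--type cross terms, while the analogous substitution under summation leaves cross terms tied to the same $\mathbf{A}$ as the pure terms and therefore unremovable.
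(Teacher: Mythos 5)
Your argument is correct in substance but takes a genuinely different route from the paper. The paper's proof is the elementary one: it takes the similarity to be the \emph{plain} inner product $\mathbf{X}_i\mathbf{X}_j^\top$ (anticipating the next modification, in which $\mathbf{W}^Q$ and $\mathbf{W}^K$ are removed outright and $\mathbf{Q}=\mathbf{K}=\mathbf{X}$), and simply expands: concatenation gives $\mathbf{Z}_i\mathbf{Z}_j^\top+\mathbf{E}_i\mathbf{E}_j^\top$, whereas summation gives those two terms plus $\mathbf{Z}_i\mathbf{E}_j^\top+\mathbf{E}_i\mathbf{Z}_j^\top$. You instead keep the projections, work with the bilinear form $\mathbf{x}_i^\top\mathbf{W}^Q(\mathbf{W}^K)^\top\mathbf{x}_j$, and turn the theorem into a statement about representational capacity: under concatenation the block structure of $\mathbf{A}=\mathbf{W}^Q(\mathbf{W}^K)^\top$ admits a witness with vanishing off-diagonal blocks and nontrivial diagonal blocks, while under summation a single $\mathbf{A}$ governs all four terms, so killing the cross terms necessarily kills at least one pure term. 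This is a more careful reading of the word ``enables'' and buys a cleaner impossibility statement for the summation case; what it costs is alignment with where the paper goes next, since \cref{theorem32} discards $\mathbf{W}^Q,\mathbf{W}^K$ precisely because a generic $\mathbf{A}_{ZZ}$ destroys the shift-invariance $\mathbf{Z}_i\mathbf{Z}_j^\top=\sum_m\cos((t_i-t_j)\omega_m)$ that the interpretation as a trigger kernel relies on, so your block-diagonal witness would not be usable downstream.

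Two small corrections to your summation argument. First, ``annihilating the two cross terms forces $\mathbf{A}=0$'' is too strong: requiring $\mathbf{z}(t_i)^\top\mathbf{A}\,\mathbf{e}(k_j)=0$ for all $t_i$ and $k_j$ forces $\mathbf{A}\mathbf{U}=0$ (the sinusoidal embeddings span $\mathbb{R}^M$ as $t$ varies, but the type embeddings only span $\mathrm{col}(\mathbf{U})$), which does not imply $\mathbf{A}=0$; it does, however, already imply $\mathbf{e}(k_i)^\top\mathbf{A}\,\mathbf{e}(k_j)=\mathbf{y}_i^\top\mathbf{U}^\top(\mathbf{A}\mathbf{U})\mathbf{y}_j=0$, so your conclusion that the pure type similarity is destroyed still stands. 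Second, the four blocks $\mathbf{A}_{ZZ},\mathbf{A}_{ZE},\mathbf{A}_{EZ},\mathbf{A}_{EE}$ are not built from disjoint parameters ($\mathbf{A}_{ZZ}$ and $\mathbf{A}_{ZE}$ share the factor $\mathbf{W}^Q_Z$), and making one of $\mathbf{W}^Q,\mathbf{W}^K$ block-diagonal does not by itself zero the off-diagonal blocks; you need compatible block structure in both (e.g., $\mathbf{W}^Q=\mathbf{W}^K=\mathrm{blockdiag}(\mathbf{B}_Z,\mathbf{B}_E)$), after which the witness does exist and the rest of your argument goes through.
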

\begin{proof}
Suppose we define $\mathbf{X}$ using concatenation, and in subsequent attention module computation, it is necessary to calculate the product $\mathbf{X}\mathbf{X}^\top$ to measure the similarity between different data points. The similarity between the $i$-th point and the $j$-th point can be expressed as follows:
\begin{equation*}
\mathbf{X}_i\mathbf{X}_j^\top=[\mathbf{Z}_i, \mathbf{E}_i][\mathbf{Z}_j,\mathbf{E}_j]^\top = \mathbf{Z}_i\mathbf{Z}_j^\top+\mathbf{E}_i\mathbf{E}_j^\top. 
\end{equation*}
Instead, if we define $\mathbf{X}$ using addition, $\mathbf{X}_i\mathbf{X}_j^\top$ is as follows: 
\begin{equation*}
(\mathbf{Z}_i+\mathbf{E}_i)(\mathbf{Z}_j+\mathbf{E}_j)^\top= \mathbf{Z}_i\mathbf{Z}_j^\top+\mathbf{E}_i\mathbf{E}_j^\top+\mathbf{Z}_i\mathbf{E}_j^\top+\mathbf{E}_i\mathbf{Z}_j^\top. 
\end{equation*}
It is evident that by defining $\mathbf{X}$ through concatenation, temporal and event type similarities are captured separately. Otherwise, the cross-similarities emerge. 
\end{proof}

\subsection{Modified Attention Module}
In ITHP, we still use self-attention to capture the influences of historical events on subsequent events. However, unlike THP, in the modified attention module, we use distinct query and key matrices: 
\begin{equation}
\begin{gathered}
\label{new_transformer_attention}
\mathbf{S} = \text{softmax}\left(
\frac{{\mathbf{Q} \mathbf{K}}^\top}{\sqrt{M_K}}\right)\mathbf{V} \in \mathbb{R}^{L \times M_V}, \\
\mathbf{Q} = \mathbf{XW}^Q \in \mathbb{R}^{L \times M_K} \Rightarrow \mathbf{Q} = \mathbf{X} \in \mathbb{R}^{L \times 2M}, \\
\mathbf{K} = \mathbf{XW}^K \in \mathbb{R}^{L \times M_K} \Rightarrow \mathbf{K} = \mathbf{X} \in \mathbb{R}^{L \times 2M}, \\
\mathbf{V} = \mathbf{XW}^V \in \mathbb{R}^{L \times M_V}, 
\end{gathered}
\end{equation}
where the $i$-th row of $\mathbf{S}$, $\mathbf{S}_i$, represents the historical influence on the $i$-th event. The calculation of $\mathbf{S}_i$ can be explicitly expressed as the summation over all events preceding event $i$, where the attention weights are normalized by the softmax: 
\begin{equation}
    \mathbf{S}_i = \sum_{j<i}\underset{j<i}{\text{softmax}}\left(\frac{\mathbf{X}_i\mathbf{X}_j^\top}{\sqrt{2M}}\right)\mathbf{V}_j \in \mathbb{R}^{M_V}. 
\label{new_S}
\end{equation}

The reason for this modification is that after removing $\mathbf{W}^Q$ and $\mathbf{W}^K$, the attention weights can be simply represented as $\mathbf{XX}^\top$. Compared to the original $\mathbf{QK}^\top$, $\mathbf{XX}^\top$ has a clearer physical meaning. In $\mathbf{XX}^\top$, the entry in the $i$-th row and $j$-th column can be expressed as a shift-invariant function $g_{k_i,k_j}(t_i-t_j)$. 
This representation allows for a more meaningful interpretation of the relationship between events. In contrast, $\mathbf{QK}^\top$ does not achieve this clarity. 
\begin{theorem}
\label{theorem32}
Assuming that $\mathbf{X}$ is obtained through the concatenation operation in \cref{new_feature_x}, after omitting $\mathbf{W}^Q$ and $\mathbf{W}^K$ in \cref{new_transformer_attention}, the entry at the $i$-th row and $j$-th column of $\mathbf{XX}^\top$ can be expressed as a shift-invariant function $g_{k_i,k_j}(t_i-t_j)$, where $t_j < t_i$.
\label{theorem_S}
\end{theorem}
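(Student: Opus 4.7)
My plan is to expand the $(i,j)$ entry of $\mathbf{X}\mathbf{X}^\top$ directly and handle the temporal block and event-type block separately, exploiting the fact that, by the previous theorem, concatenation prevents any cross terms between time and type. The very first step is to write
\[
(\mathbf{X}\mathbf{X}^\top)_{ij} \;=\; \mathbf{X}_i \mathbf{X}_j^\top \;=\; \mathbf{Z}_i \mathbf{Z}_j^\top \;+\; \mathbf{E}_i \mathbf{E}_j^\top,
\]
which reduces the problem to showing that each summand depends only on the types $(k_i,k_j)$ and on the time difference $t_i - t_j$.

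For the event-type term, I would use $\mathbf{e}(k) = \mathbf{U}\mathbf{y}_k$ with one-hot $\mathbf{y}_k$ to rewrite $\mathbf{E}_i \mathbf{E}_j^\top = \mathbf{y}_i^\top \mathbf{U}^\top \mathbf{U}\, \mathbf{y}_j$, which simply picks out the $(k_i,k_j)$ entry of $\mathbf{U}^\top \mathbf{U}$. Call this scalar $c_{k_i,k_j}$; it depends only on the pair of types and carries no temporal argument, so it can be absorbed cleanly into the final $g_{k_i,k_j}$.

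For the temporal term I would substitute the sinusoidal definition from Section 3.3 and pair each even coordinate $z_{2m}(t) = \sin(\omega_m t)$ with its neighboring odd coordinate $z_{2m+1}(t) = \cos(\omega_m t)$, where $\omega_m = 10000^{-2m/M}$ is the shared frequency. Applying the angle-difference identity $\sin A\sin B + \cos A\cos B = \cos(A-B)$ to each pair collapses the inner product to
\[
\mathbf{Z}_i \mathbf{Z}_j^\top \;=\; \sum_{m} \cos\!\bigl(\omega_m (t_i - t_j)\bigr),
\]
which is manifestly shift-invariant. Combining both blocks and setting $\tau = t_i - t_j$ gives
\[
g_{k_i,k_j}(\tau) \;:=\; c_{k_i,k_j} \;+\; \sum_{m} \cos(\omega_m \tau),
\]
so that $(\mathbf{X}\mathbf{X}^\top)_{ij} = g_{k_i,k_j}(t_i-t_j)$, proving the claim.

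I do not anticipate any serious obstacle; the argument is a block decomposition plus a trigonometric identity. The one delicate bookkeeping point is verifying the paper's odd/even indexing convention for $z_j(t)$ genuinely pairs a sine coordinate with a cosine coordinate at \emph{exactly} the same frequency $\omega_m$. If the frequencies in the paired coordinates were even slightly mismatched, the sum-to-product collapse would fail and the temporal block would retain absolute-time dependence; confirming the pairing is therefore the step I would spell out most carefully.
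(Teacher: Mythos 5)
Your proposal is correct and follows essentially the same route as the paper's proof: decompose $(\mathbf{X}\mathbf{X}^\top)_{ij}$ into $\mathbf{Z}_i\mathbf{Z}_j^\top + \mathbf{E}_i\mathbf{E}_j^\top$ via the concatenation, collapse the temporal block to $\sum_m \cos(\omega_m(t_i-t_j))$ with the angle-difference identity, and let the type block supply the $(k_i,k_j)$ dependence. Your explicit identification of $\mathbf{E}_i\mathbf{E}_j^\top$ as the $(k_i,k_j)$ entry of $\mathbf{U}^\top\mathbf{U}$, and your check that the odd/even indexing pairs sine and cosine at the same frequency, are slightly more careful than the paper's writeup but do not change the argument.
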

\begin{proof}
When we use \cref{new_feature_x} to obtain $\mathbf{X}$ and remove $\mathbf{W}^Q$ and $\mathbf{W}^K$ in \cref{new_transformer_attention}, the similarity between the $i$-th and $j$-th points, denoted as $\mathbf{X}_i\mathbf{X}_j^\top$, is expressed as $\mathbf{X}_i\mathbf{X}_j^\top = \mathbf{Z}_i\mathbf{Z}_j^\top + \mathbf{E}_i\mathbf{E}_j^\top$. 
If we assume that $M$ is even, $\mathbf{Z}_i\mathbf{Z}_j^\top$ can be further represented as:
\begin{align*}
\mathbf{Z}_i\mathbf{Z}_j^\top &= \sum_{m=1}^{\frac{M}{2}+1}\cos (t_i\omega_m)\cos(t_j\omega_m) + \sin (t_i\omega_m)\sin(t_j\omega_m)\\
&= \sum_{m=1}^{\frac{M}{2}+1}\cos((t_i - t_j)\omega_m),
\end{align*}
where $\omega_m = 1/10000^{2(m-1)/M}$.
It is clear that $\mathbf{X}_i\mathbf{X}_j^\top$ can be expressed as a shift-invariant function $g_{k_i,k_j}(t_i-t_j)$, with $t_i-t_j$ originating from $\mathbf{Z}_i\mathbf{Z}_j^\top$, and the subscripts $k_i, k_j$ arising from $\mathbf{E}_i\mathbf{E}_j^\top$. While retaining $\mathbf{W}^Q$ and $\mathbf{W}^K$ leads to the following expression:
\begin{equation*}
\mathbf{Q}_i\mathbf{K}_j^\top 
= [\mathbf{Z}_i, \mathbf{E}_i]\mathbf{W}^Q{\mathbf{W}^K}^\top[\mathbf{Z}_j,\mathbf{E}_j]^\top,
\end{equation*}
where the introduction of $\mathbf{W}^Q{\mathbf{W}^K}^\top$ can once again introduce undesired cross-similarities and render the temporal similarity term $\mathbf{Z}_i[\mathbf{W}^Q{\mathbf{W}^K}^\top]_{\mathbf{Z}_i\mathbf{Z}_j}\mathbf{Z}_j^\top$ unable to be expressed in a shift-invariant function form. 
\end{proof}
\begin{corollary}
Given \cref{theorem_S}, we can further simplify \cref{new_S} as follows: 
\begin{equation}
    \mathbf{S}_i = \sum_{j<i}\mathbf{g}^\top_{k_i,k_j}(t_i-t_j,t_j) \in \mathbb{R}^{M_V}, 
\label{corollary_s}
\end{equation}
where $\mathbf{g}$ is an $M_V$ dimensional vector function. 
\end{corollary}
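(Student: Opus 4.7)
The plan is to substitute the shift-invariant identity from \cref{theorem_S} into \cref{new_S} and then repackage each summand as a vector-valued function indexed by the event-type pair $(k_i,k_j)$ with arguments $(t_i-t_j,t_j)$. First I would invoke \cref{theorem_S} to replace each attention logit $\mathbf{X}_i\mathbf{X}_j^\top$ by the scalar $g_{k_i,k_j}(t_i-t_j)$, so that the softmax numerator is explicitly a function of only the two types and the relative time.

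Next I would unpack the value vector: since $\mathbf{V}_j=\mathbf{X}_j\mathbf{W}^V$ and, by the concatenation definition in \cref{new_feature_x}, the row $\mathbf{X}_j$ is entirely determined by the pair $(t_j,k_j)$ through $[\mathbf{Z}(t_j),\mathbf{E}(k_j)]$, the value admits the representation $\mathbf{V}_j=\mathbf{v}_{k_j}(t_j)\in\mathbb{R}^{M_V}$, i.e., a function of the absolute timestamp $t_j$ and the type $k_j$. Multiplying the scalar softmax weight by this value yields, for each $j<i$, an $M_V$-dimensional quantity whose inputs are exactly $k_i,k_j,t_i-t_j$ and $t_j$. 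Declaring this vector (up to transpose, to match the row orientation in \cref{corollary_s}) to be $\mathbf{g}_{k_i,k_j}(t_i-t_j,t_j)$ and summing over $j<i$ produces the claimed identity and simultaneously exhibits the ``dynamic trigger-kernel'' form $\phi(t-t_i,t_i)$ alluded to in \cref{nonlinear_hawkes}.

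The main obstacle is the softmax normalizer $\sum_{\ell<i}\exp(g_{k_i,k_\ell}(t_i-t_\ell)/\sqrt{2M})$, which is shared across the summation in row $i$ and depends on the entire past history rather than on any single pair $(t_i-t_j,t_j)$. I would dispose of it by observing that, with $i$ fixed, this quantity is constant across the index $j$, so it can be absorbed into each term's scalar prefactor without disturbing the structural claim that the resulting summand depends only on $(k_i,k_j,t_i-t_j,t_j)$. Because the free variables of $\mathbf{g}$ now include both the relative lag $t_i-t_j$ and the absolute time $t_j$ of the triggering event, the expression is genuinely a dynamic, rather than merely shift-invariant, trigger kernel, which is precisely the correspondence with statistical nonlinear Hawkes processes that the corollary is designed to establish.
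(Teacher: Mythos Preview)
Your proposal is correct and follows essentially the same approach as the paper: invoke \cref{theorem_S} on the logits, then note that multiplication by $\mathbf{V}_j$ injects the absolute-time dependence $t_j$, and package the product as the vector function $\mathbf{g}$. You are in fact more explicit than the paper, which simply relabels the post-softmax weight as $\tilde g_{k_i,k_j}(t_i-t_j)$ and says the extra $t_j$ ``stems from the introduction of $\mathbf{V}_j$'' without unpacking $\mathbf{V}_j=[\mathbf{Z}(t_j),\mathbf{E}(k_j)]\mathbf{W}^V$ or flagging the history-dependent softmax normalizer that you correctly identify.
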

\begin{proof}
According to \cref{theorem32}, $\mathbf{X}_i\mathbf{X}_j^\top$ can be expressed as a shift-invariant function $g_{k_i,k_j}(t_i-t_j)$. After normalization through softmax in \cref{new_S}, we obtain $\tilde{g}_{k_i,k_j}(t_i-t_j)$ satisfying $\sum_{j<i}\tilde{g}_{k_i,k_j}(t_i-t_j)=1$. When multiplied by $\mathbf{V}_j$, $\tilde{g}_{k_i,k_j}(t_i-t_j)\mathbf{V}_j$ yields a vectorized, time-varying and non-normalized function $\mathbf{g}^\top_{k_i,k_j}(t_i-t_j,t_j)$ where the additional $t_j$ stems from the introduction of $\mathbf{V}_j$. 
\end{proof}

\subsection{Modified Conditional Intensity Function}
The form of \cref{corollary_s} naturally reminds us of the trigger kernel summation in statistical Hawkes processes. The only difference is that $\mathbf{g}$ in \cref{corollary_s} is a vector, whereas the trigger kernel in statistical Hawkes processes is a scalar function. Taking inspiration from this, we propose a more interpretable conditional intensity function:
\begin{equation}
\begin{aligned}
&\lambda_k(t|\mathcal{H}_t)=\text{softplus}\left(\sum_{t_i<t}\colorbox{yellow!40}{$\mathbf{g}^\top_{k,k_i}(t-t_i,t_i)\mathbf{w}_k$}
+\colorbox{green!40}{$b_k$}\right) \\
&={\text{softplus}}\left(\sum_{t_i<t}
\colorbox{yellow!40}{$\underset{t_i<t}{\text{softmax}}\left(\frac{\mathbf{X}_t\mathbf{X}_i^\top}{\sqrt{2M}}\right)\mathbf{V}_i\mathbf{w}_k$}
+\colorbox{green!40}{$b_k$}\right), 
\end{aligned}
\label{new_intensity}
\end{equation}
where $\mathbf{w}_k$ is a learnable parameter used to aggregate the vector $\mathbf{g}$ into a scalar value and $b_k$ is a learnable bias term. 
The newly designed conditional intensity aligns perfectly with the nonlinear Hawkes processes with a time-varying trigger kernel. 
\textbf{The green term $b_k$ in \cref{new_intensity} corresponds to the base rate $\mu_k$ in \cref{nonlinear_hawkes}, the yellow term $\mathbf{g}^\top_{k,k_i}(t-t_i,t_i)\mathbf{w}_k$ in \cref{new_intensity} corresponds to the time-varying trigger kernel $\phi_{k,k_i}(t-t_i,t_i)$ in \cref{nonlinear_hawkes}, and the softplus function serves as a non-linear mapping ensuring the non-negativity of the intensity.} This leads to improved interpretability as the trigger kernel can be explicitly expressed in our design, in contrast to the original THP. 

\subsection{Fully Attention-based Intensity Function}
\label{section: Fully Attention-based Intensity Function}
In point process model training with maximum likelihood estimation (MLE), it is vital to compute the intensity integral over the entire time domain, which requires modeling the intensity both at event positions and on non-event intervals. 
In the RNN-based deep point process models~\citep{du2016recurrent,mei2017neural}, due to the limitations of the RNN framework in solely modeling latent representations at event positions, the aforementioned works adopted parameterized extrapolation methods to model the intensity on non-event intervals, see Eq. (11) in \cite{du2016recurrent} and Eq. (7) in \cite{mei2017neural}. THP \cite{zuo2020transformer} also adopted the same approach to model the intensity on non-event intervals (the red term in \cref{thp_lamda}). 
However, we emphasize that \textbf{attention-based deep point process models do not necessarily require the parameterized extrapolation methods to model the intensity on non-event intervals}. 
Our design \cref{new_intensity} employs the attention mechanism to model the intensity function whether it is at event positions or not. Therefore, we refer to it as a ``fully attention-based intensity function''. The fully attention-based intensity function circumvents the limitations of parameterization and ensures that the model can effectively capture intricate intensity patterns at non-event positions, thus enhancing the model's expressive power. 



\subsection{Model Training}
For a given sequence $\mathcal{S} = \{(t_i, k_i)\}_{i=1}^{L}$ on $[0, T]$, the point process model training can be performed by the MLE approach. The log-likelihood of a point process is expressed in the following form: 
\begin{equation}
\mathcal{L}(\mathcal{S})=\sum_{i=1}^L\log\lambda_{k_i}(t_i|\mathcal{H}_{t_i})-\int_0^T\lambda(t|\mathcal{H}_t)dt, 
\label{loglikelihood}
\end{equation}
where $\lambda(t|\mathcal{H}_t)=\sum_{k=1}^K\lambda_k(t|\mathcal{H}_t)$.

For ITHP, we estimate its parameters by maximizing the log-likelihood. Regarding the first term, we only need to compute the intensity function at event positions using \cref{new_intensity}. As for the second term, the intensity integral generally lacks an analytical expression. Here, we employ numerical integration by discretizing the time axis into a sufficiently fine grid and calculating the intensity function at each grid point using \cref{new_intensity}. 

\textbf{Complexity:}
The utilization of a fine grid does not significantly increase computational time. This is because the attention mechanism facilitates parallel computation of attention outputs for each point. This parallelized computation 
improves the scalability of ITHP. 
Parallel computation with more grid points would require additional memory. Fortunately, for one-dimensional temporal point processes, a large number of grids is not necessary. In subsequent experiments, all datasets can run smoothly with only 8GB memory. 

\section{Experiment}
We assess the performance of ITHP using both synthetic and public datasets. With the synthetic dataset, our objective is to validate the interpretability of our model by accurately identifying the underlying ground-truth trigger kernel. For the public datasets, we conduct a comprehensive evaluation of ITHP by comparing its performance against popular baseline models. 
The goal here is twofold: to quantitatively demonstrate the superior expressive power of ITHP and to qualitatively analyze its interpretability on real datasets. 
\subsection{Synthetic Data}
\label{section: Toy Data}
We validate the interpretability of ITHP using two sets of 2-variate Hawkes processes data. Each dataset is simulated from a 2-variate Hawkes processes described in~\cref{Eq: Hawkes Process}, using the thinning algorithm~\citep{ogata1998space}. Both datasets share a common base rate ($\mu = 0.2$), but they possess distinct trigger kernels: 
\begin{itemize}
    \item \textbf{Exponential Decay Kernel} This kernel assumes that the influence of historical events decays exponentially as time elapses. The kernel function is given by: $\phi_{ij}(\tau) = \alpha_{ij} \exp(-\beta_{ij} \tau)$ for $\tau>0$, where $j$ is the source type and $i$ is the target type. Specifically,  $\alpha_{11}=\alpha_{22}=3$, $\alpha_{12}=2$, $\alpha_{21}=1$ and $\beta_{ij} = 5$ for all $i,j$. 
    \item \textbf{Half Sinusoidal Kernel} This kernel assumes the influence of historical events follows a sinusoidal pattern as time elapses and disappears when the interval surpasses $\pi$. The kernel function is given by: $\phi_{ij}(\tau) =\alpha_{ij} \sin(\tau)$ for $0< \tau < \pi$. Likewise, $j$ is the source type and $i$ is the target type. Specifically, $\alpha_{11}=\alpha_{22}=0.33$, $\alpha_{12}=0.1$, and $\alpha_{21}=0.05$. 
\end{itemize}
Further elaboration on the simulation process and statistical aspects of the synthetic dataset can be found in \cref{Appendix: Synthetic Data}. 
\begin{figure*}[tb]
    \begin{center}
    \adjustbox{valign=b}{
    \begin{minipage}{0.24\textwidth}
    \includegraphics[width=\columnwidth, height=0.75\columnwidth]{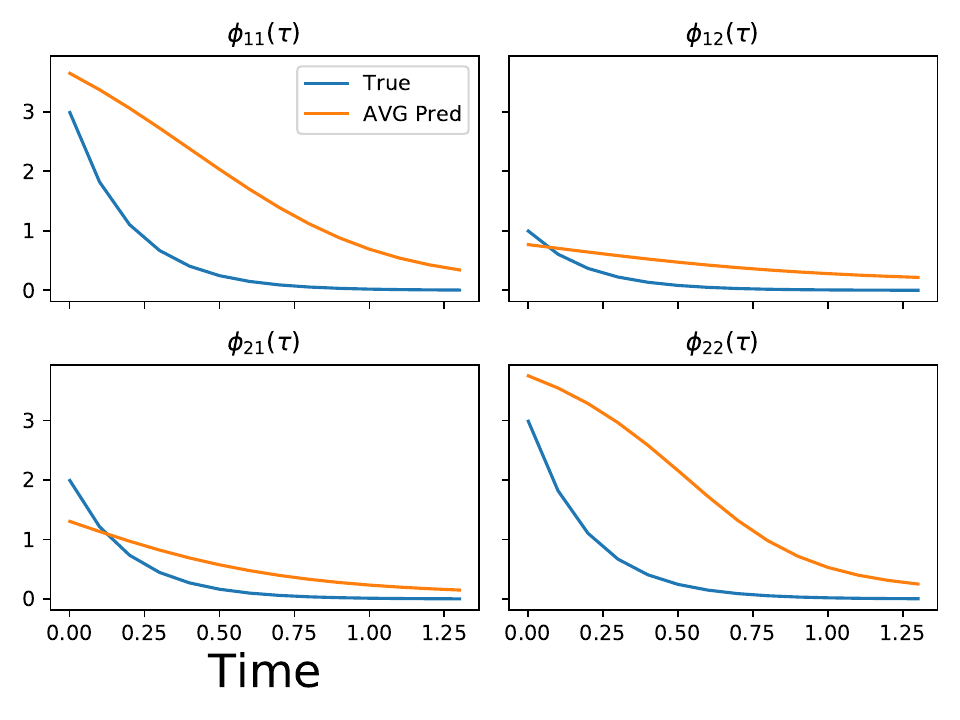}
    \subcaption[]{Trigger Kernel Recover (Exp)}
    \label{fig: exp kernel recover}
    \end{minipage}}
    \adjustbox{valign=b}{
    \begin{minipage}{0.24\textwidth}
    \includegraphics[width=\columnwidth, height=0.75\columnwidth]{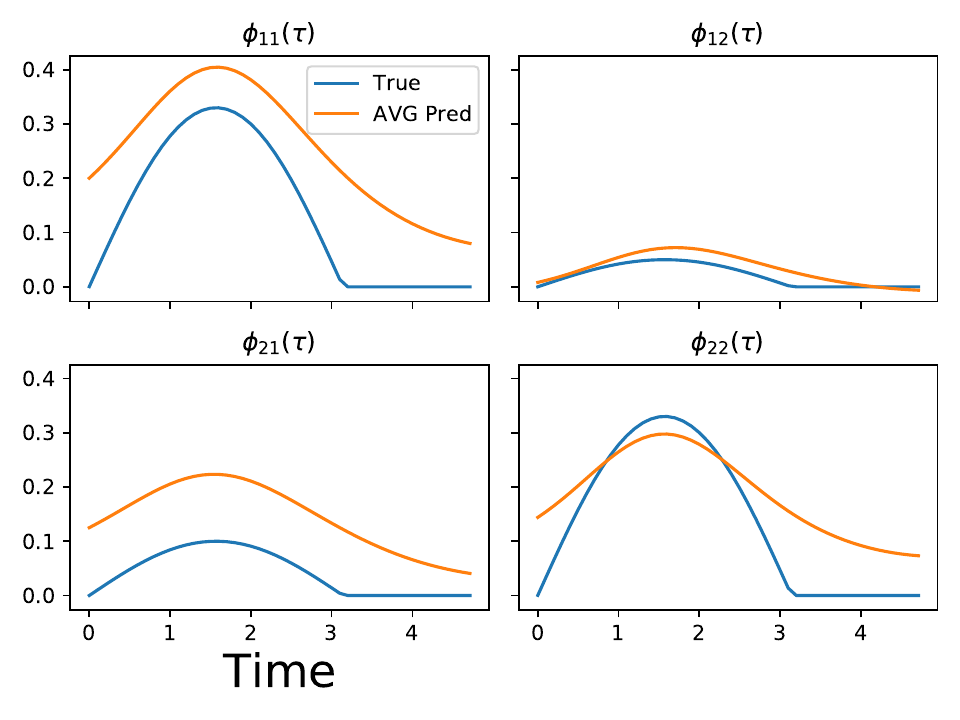}
    \subcaption[]{Trigger Kernel Recover (Sin)}
    \label{fig: half-sin kernel recover}
    \end{minipage}}
    \adjustbox{valign=b}{
    \begin{minipage}{0.24\textwidth}
    \includegraphics[width=\columnwidth, height=0.75\columnwidth]{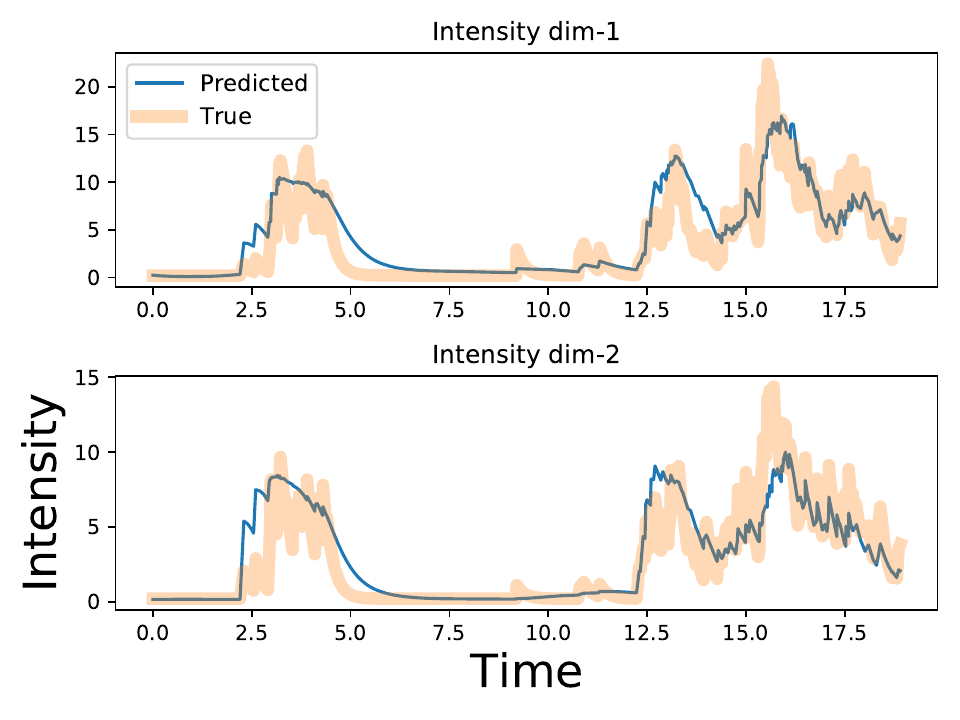}
    \subcaption[]{Intensity Recover (Exp)}
    \label{fig: Intensity Recover(Sin)}
    \end{minipage}}
    \adjustbox{valign=b}{
    \begin{minipage}{0.24\textwidth}
    \includegraphics[width=\columnwidth, height=0.75\columnwidth]{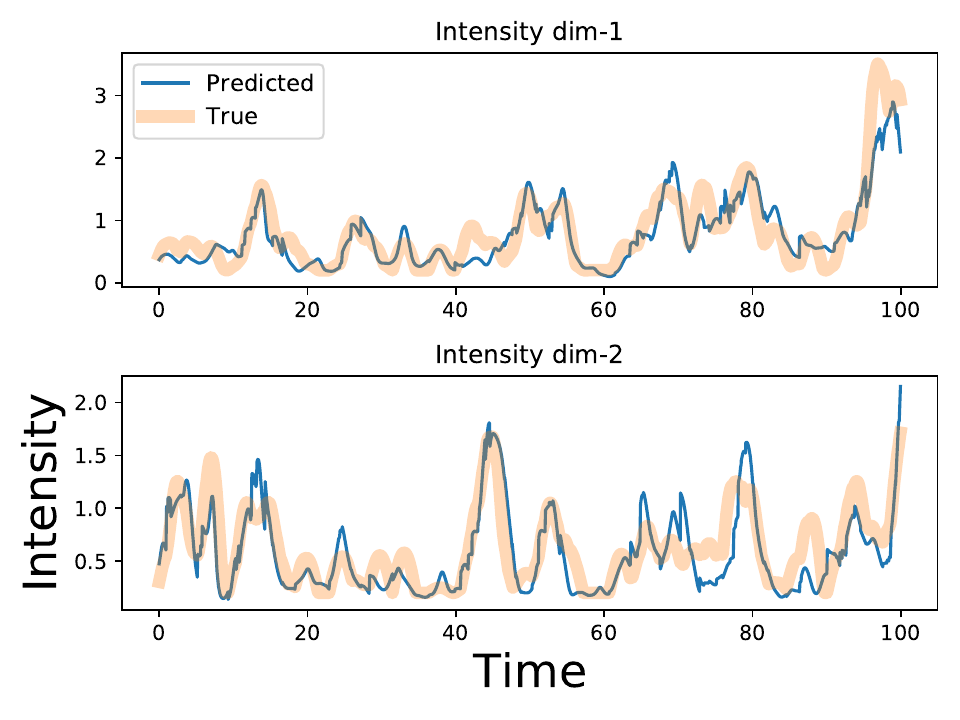}
    \subcaption[]{Intensity Recover (Sin)}
    \label{fig: Intensity Recover(Exp)}
    \end{minipage}}


    \end{center}
    \caption{Experimental results with synthetic data. (a)(b) Comparison between the ground-truth trigger kernel and the one learned by ITHP: (a) exponential decay and (b) half sinusoidal. (c)(d) Comparison between the ground-truth intensity and the one learned by ITHP: (c) exponential decay and (d) half sinusoidal.}
    
    \label{fig: Synthetic Experiment Result}
\end{figure*}

\textbf{Results:}
We validate the interpretability of ITHP by reconstructing the trigger kernel. In ITHP, the trigger kernel is represented as $\mathbf{g}^\top_{k,k_i}(t - t_i, t_i)\mathbf{w}_k$, which is time-varying. To uncover the time-invariant trigger kernel inherent in the synthetic dataset, we evaluate trigger kernels at various time points and compute their mean. This approach enables us to extract the desired time-invariant trigger kernel~\citep{zhang2020self}. The results are presented in \cref{fig: exp kernel recover,fig: half-sin kernel recover}, revealing a noticeable alignment between the learned kernel trends and the patterns exhibited by the ground-truth kernels. 
Moreover, as depicted in \cref{fig: Intensity Recover(Exp),fig: Intensity Recover(Sin)}, the learned intensity function from ITHP exhibits a striking resemblance to the ground-truth intensity function. This observation underscores ITHP's capability to accurately capture the true conditional intensity function for both exponential decay and half sinusoidal Hawkes processes.
\begin{figure}

    \centering
    \includegraphics[width=0.65\columnwidth]{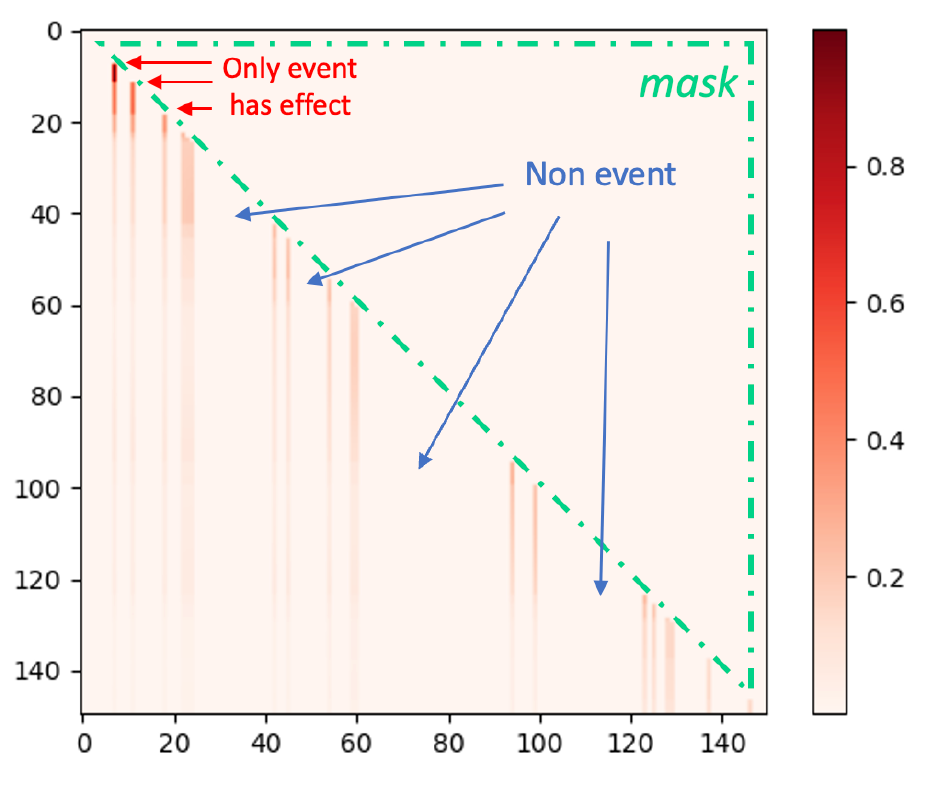}
    \vspace*{-3mm}
    \caption{The attention weight matrix for events and grids in the case of exponential decay kernel. 
    Horizontal axis: source point, Vertical axis: target point. 
    It is evident that events have an impact on the future which decays over time. Grids within non-event intervals do not exert any influence as they are not actual events.}
    \label{fig: Attention Map (Exp)}
    \vspace*{-5mm}
\end{figure}
We also visualize the learned attention map of ITHP, which provides a deeper insight into the influence patterns. As depicted in \cref{fig: Attention Map (Exp)}, this is the attention weight matrix of a testing sequence in the context of exponential decay Hawkes process data. The sequence encompasses both event timestamps and grids within the non-event intervals.
In the matrix, the rows and columns correspond to events and grids on the sequence (arranged chronologically). The horizontal axis represents the source point, while the vertical axis represents the target point. 
Only events have the potential to impact subsequent points, whereas grids, lacking actual event occurrences, cannot affect future points. As a result, it is evident that numerous columns corresponding to grids have values of $0$. Due to a masking operation, the upper triangular section, including the diagonal, is set to $0$, which restricts events from influencing the past.
Moreover, the color of event columns becomes progressively lighter as time advances, which aligns with the characteristics of the ground-truth exponential decay trigger kernel. 

\subsection{Public Data}
In this section, we extensively evaluate ITHP by comparing it to baseline models across several public datasets. 
We have selected several network-sequence datasets, including social media (StackOverflow), online shopping (Amazon, Taobao), traffic networks (Taxi), and a widely used public synthetic dataset (Conttime). 

\subsubsection{Datasets}
We investigate five public datasets, each accompanied by a concise description. 
More details can be found in \cref{Appendix: Public Data}. 
\begin{itemize}
    \item
    \textbf{StackOverflow}\footnote{\url{https://snap.stanford.edu/data/}}\citep{snapnets}: This dataset has two years of user awards on a question-answering website: StackOverflow. Each user received a sequence of badges (Nice Question, Good Answer, $\ldots$) and there are $K = 22$ kinds of badges. 
    \item
    \textbf{Amazon}\footnote{\url{https://nijianmo.github.io/amazon/}}\citep{ni2019justifying}: This dataset includes user online shopping behavior events on Amazon website (browsing, purchasing, $\dots$) and there are in total $K=16$ event types. 
    \item 
    \textbf{Taobao}\footnote{\url{https://tianchi.aliyun.com/dataset/649}}\citep{zhu2018learning}: This dataset is released for the 2018 Tianchi Big Data Competition and comprises user activities on Taobao website (browsing, purchasing, $\dots$) and there are in total $K=17$ event types. 
    \item
    
    \textbf{Taxi}\footnote{\url{https://chriswhong.com/open-data/foil_nyc_taxi/}}\citep{whong2014foiling}: While our main focus is social networks, our model can also be applied to other domains. This dataset comprises traffic-network sequences, including taxi pick-up and drop-off incidents across five boroughs of New York City. Each borough, whether involved in a pick-up or drop-off event, represents an event type and there are in total $K = 2\times5=10$ event types. 

    \item 
    \textbf{Conttime}\citep{mei2017neural}: This dataset is a popular public synthetic dataset designed for Hawkes processes, which comprises ten thousand event sequences with event types $K=5$.

\end{itemize}

\subsubsection{Baselines}
In the experiments, we conduct a comparative analysis against the following popular baseline models: 
\begin{itemize}
    \item \textbf{RMTPP}~\citep{du2016recurrent} is a RNN-based model. It learns the representation of influences from historical events and takes event intervals as input explicitly. 
    \item \textbf{NHP}~\citep{mei2017neural} utilizes a continuous-time LSTM network, which incorporates intensity decay, allowing for a more natural representation of temporal dynamics without requiring explicit encoding of event intervals as inputs to the LSTM. 
    \item \textbf{SAHP}~\citep{zhang2020self} uses self-attention to characterize the influence of historical events and enhance its predictive capabilities by capturing intricate dependencies within the data. 
    \item \textbf{THP}~\citep{zuo2020transformer} is another attention-based model that utilizes Transformer to capture event dependencies while maintaining computational efficiency. 
\end{itemize}

\begin{table*}[th]
\centering
\caption{The TLL and ACC of ITHP and other four baselines on five public datasets. Champion is in bold, runner-up is underlined. 
Note that the Ex-ITHP is the extrapolated ITHP, which is a modified version of ITHP used for ablation study. More details are provided in \cref{section: ablation study}.}
\label{table: Real Data Experiment}
\begin{sc}
\scalebox{1}{
\begin{tabular}{ccccccccccc}
    \toprule
    \multirow{2}{*}{Model} & \multicolumn{2}{c}{stackoverflow} & \multicolumn{2}{c}{amazon}  & \multicolumn{2}{c}{taobao}  & \multicolumn{2}{c}{taxi} & \multicolumn{2}{c}{conttime}\\
    \cmidrule{2-11}
        & TLL($\uparrow$) & ACC($\uparrow$)   & TLL($\uparrow$) & ACC($\uparrow$) & TLL($\uparrow$) & ACC($\uparrow$) & TLL($\uparrow$) & ACC($\uparrow$) & TLL($\uparrow$) & ACC($\uparrow$) \\
    \midrule
    RMTPP & $-2.87_{\pm 0.02}$ & $0.43_{\pm 0.01}$ & $-2.68_{\pm 0.03}$ & $0.30_{\pm 0.01}$ & $-3.81_{\pm 0.05}$ & $0.44_{\pm 0.03}$ & $0.17_{\pm 0.04}$ & $0.91_{\pm 0.01}$ & $-1.88_{\pm 0.03}$ & $0.38_{\pm 0.01}$\\
    NHP & $-2.80_{\pm 0.01}$ & $0.43_{\pm 0.02}$ & $-2.70_{\pm 0.05}$ & $0.27_{\pm 0.01}$ & $\underline{-3.10}_{\pm 0.02}$ & $0.45_{\pm 0.01}$ & $\underline{0.24}_{\pm 0.04}$ & $\underline{0.93}_{\pm 0.04}$ & $\underline{-1.54}_{\pm 0.01}$ & $\underline{0.41}_{\pm 0.03}$\\
    SAHP & $\mathbf{-1.96}_{\pm 0.02}$ & $\underline{0.45}_{\pm 0.01}$ & $\mathbf{-1.42}_{\pm 0.04}$ & $\underline{0.35}_{\pm 0.01}$ & $-4.70_{\pm 0.03}$ & $\underline{0.46}_{\pm 0.01}$ & $0.21_{\pm 0.03}$ & $\mathbf{0.94}_{\pm 0.01}$ & $-2.22_{\pm 0.02}$ & $\mathbf{0.42}_{\pm 0.01}$\\
    THP & $-3.41_{\pm 0.01}$ & $\mathbf{0.46}_{\pm 0.01}$ & $-3.26_{\pm 0.21}$ & $0.34_{\pm 0.01}$ & $-4.76_{\pm 0.11}$ & $0.44_{\pm 0.05}$ & $0.22_{\pm 0.05}$ & $\underline{0.93}_{\pm 0.02}$ & $-3.16_{\pm 0.19}$ & $0.34_{\pm 0.01}$\\
    ITHP & $\underline{-2.50}_{\pm 0.03}$ & $\mathbf{0.46}_{\pm 0.01}$ & $\underline{-2.10}_{\pm 0.02}$ & $\mathbf{0.36}_{\pm 0.01}$ & $\mathbf{-3.09}_{\pm 0.02}$ & $\mathbf{0.47}_{\pm 0.01}$ & $\mathbf{0.25}_{\pm 0.05}$ & $\mathbf{0.94}_{\pm 0.01}$ & $\mathbf{-1.43}_{\pm 0.01}$ & $0.38_{\pm 0.01}$\\
    \midrule
    \midrule
    Ex-ITHP & $-3.58_{\pm 0.01}$ &$0.43_{\pm 0.03}$	&$-4.65_{\pm 0.02}$ &$0.33_{\pm 0.01}$	&$-4.80_{\pm 0.01}$ &$0.41_{\pm 0.02}$& $0.18_{\pm 0.03}$ &$0.85_{\pm 0.03}$& $-3.74_{\pm 0.04}$ & $0.31_{\pm 0.02}$  \\
    \bottomrule
\end{tabular}}
\end{sc}
\end{table*}

\subsubsection{Metrics}
We assess ITHP and other baseline models using two distinct metrics:
\begin{itemize}
    \item \textbf{TLL}: the log-likelihood on the test data which quantifies the model's ability to capture the underlying data distribution and effectively predict future events. 
    \item \textbf{ACC}: the event type prediction accuracy on the test data which characterizes the model's accuracy in predicting the specific types of events, thereby gauging its capacity to discriminate between different event categories.
\end{itemize}

\subsubsection{Quantitative Analysis}
We conduct a comparative experiment across five datasets using all baseline models. The results, as shown in \cref{table: Real Data Experiment}, demonstrate that ITHP can achieve competitive performance. A more intuitive visualization is presented in \cref{fig: TLL Comparison}, where each model's TLL is standardized by subtracting the TLL of ITHP.
It is worth noting that, to achieve interpretability, ITHP undergoes a certain degree of simplification, resulting in a reduction of its number of parameters. Interestingly, we observe that ITHP achieves comparable performance to other models with larger number of parameters. ITHP can equivalently be regarded as a non-parametric, time-varying, and nonlinear statistical Hawkes process. The results in \cref{fig: TLL Comparison} provide some reflections: while deep point processes claim to outperform statistical point processes, it is evident that a sufficiently flexible (non-parametric, time-varying, and nonlinear) statistical point process can also achieve competitive performance. 
Furthermore, ITHP maintains excellent interpretability, both at the event level and the event type level. \cref{fig: Attention Map(SO)} displays the attention weight matrix of a testing sequence from StackOverflow, illustrating the impact between events: the influence from past events tends to decrease as time elapsed.
Moreover, ITHP can describe the influence functions between event types. Take StackOverflow as an example: \cref{fig: Learned Kernel Demo} presents the learned influence functions from types {1,3,4,5,9,12} to type 4, which is the most prevalent type. Generally, these influences tend to decay over time.

\begin{figure*}[th]
    \begin{center}
    \adjustbox{valign=b}{
    \begin{minipage}{0.24\textwidth}
    \includegraphics[width=\columnwidth]{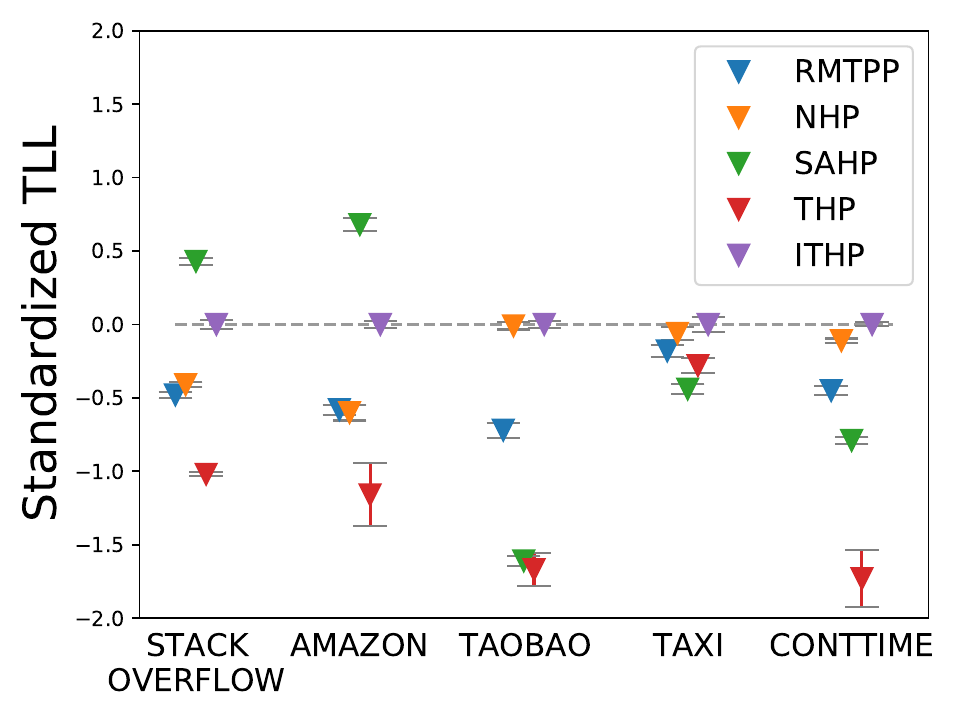}
    \subcaption[]{TLL Comparison}
    \label{fig: TLL Comparison}
    \end{minipage}}
    \adjustbox{valign=b}{
    \begin{minipage}{0.24\textwidth}
    \includegraphics[width=\columnwidth, height=0.75\columnwidth]{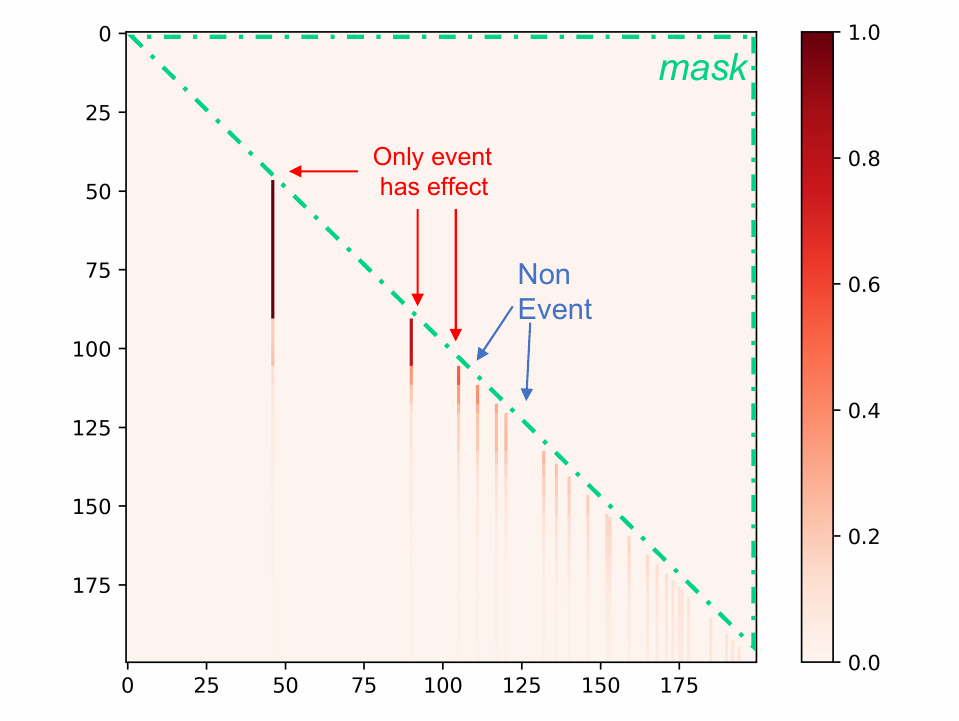}
    \subcaption[]{Attention Map (StackOverflow)}
    \label{fig: Attention Map(SO)}
    \end{minipage}}
    \adjustbox{valign=b}{
    \begin{minipage}{0.24\textwidth}
    \includegraphics[width=\columnwidth]{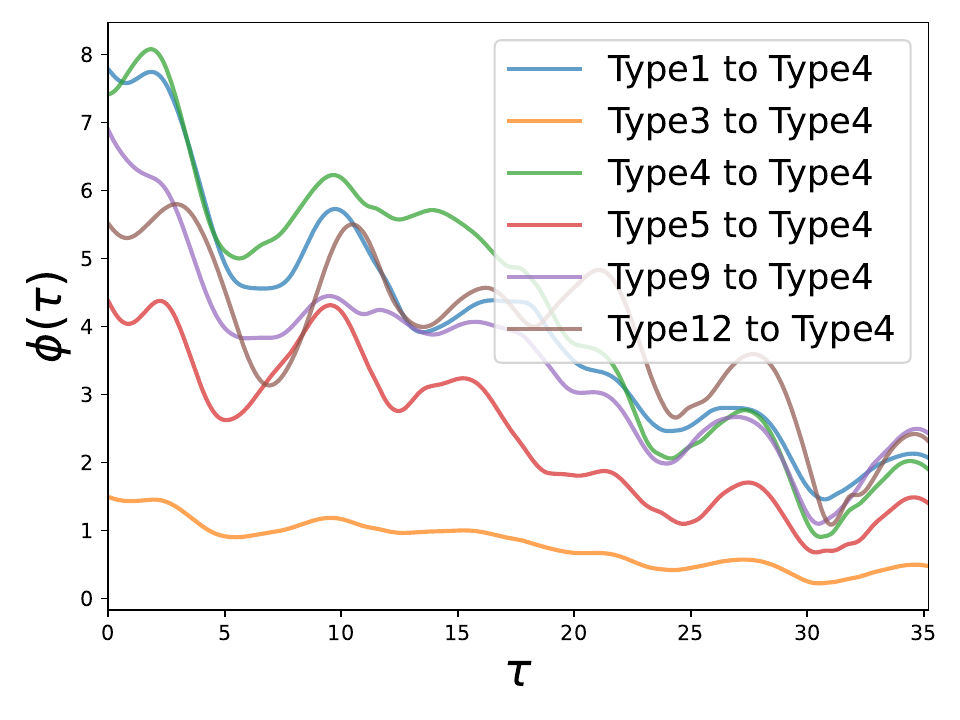}
    \subcaption[]{Estimated $\hat{\phi}(\tau)$ (StackOverflow)}
    \label{fig: Learned Kernel Demo}
    \end{minipage}}
    \adjustbox{valign=b}{
    \begin{minipage}{0.24\textwidth}
    \includegraphics[width=\columnwidth]{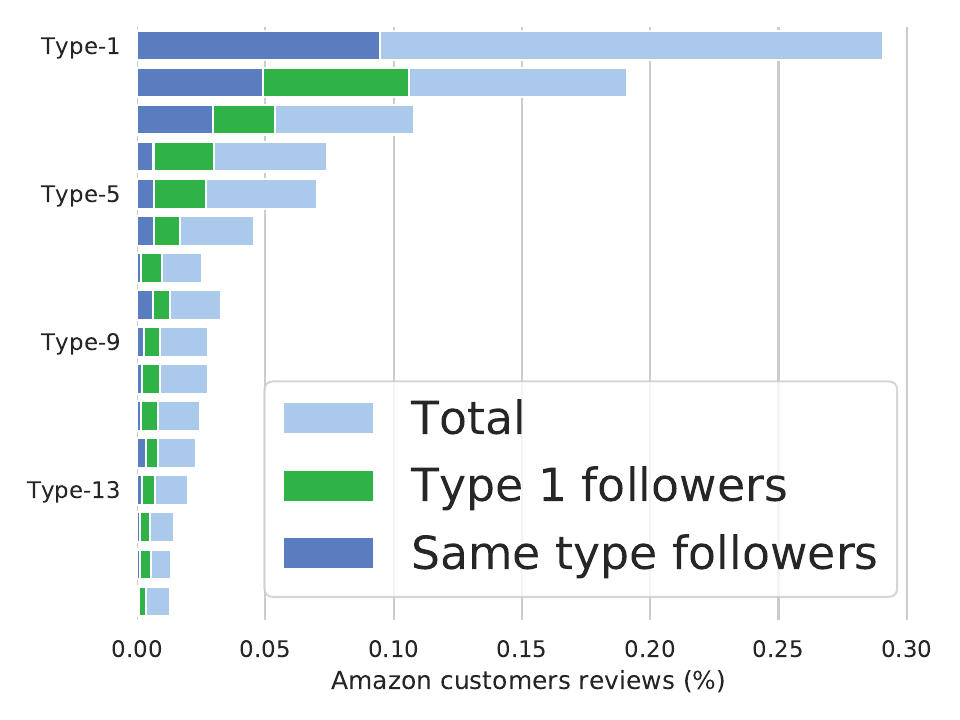}
    \subcaption[]{Amazon Statistics}
    \label{fig: Amazon statistics}
    \end{minipage}}
    \\
    \adjustbox{valign=b}{
    \begin{minipage}{0.24\textwidth}
    \includegraphics[width=\columnwidth]{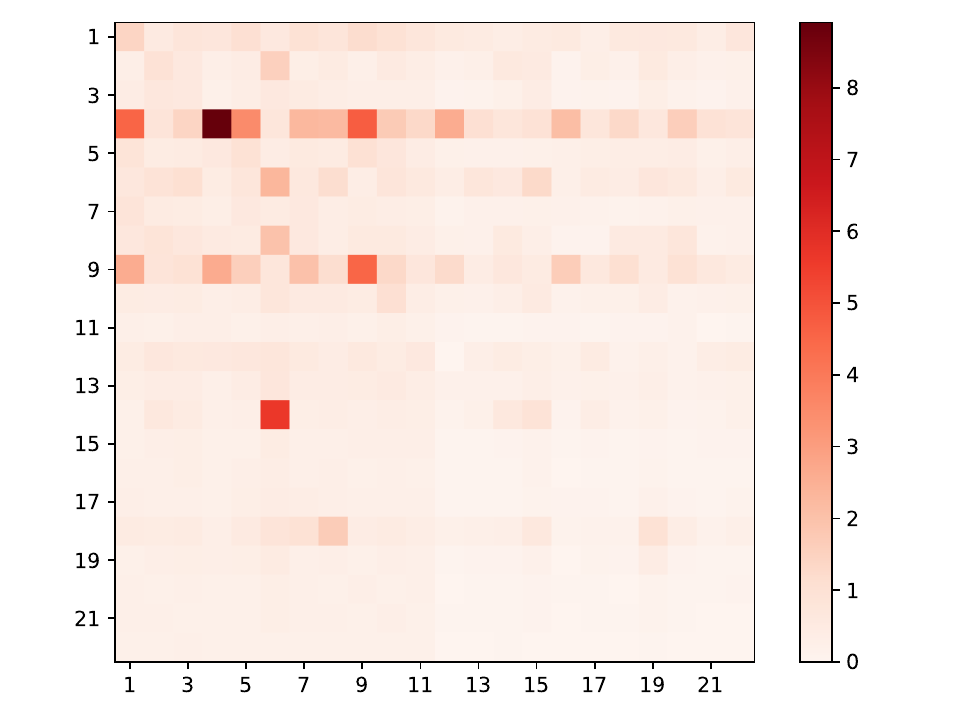}
    \subcaption[]{Heatmap of StackOverflow}
    \label{fig: HeatMap(SO)}
    \end{minipage}}
    \adjustbox{valign=b}{
    \begin{minipage}{0.24\textwidth}
    \includegraphics[width=\columnwidth]{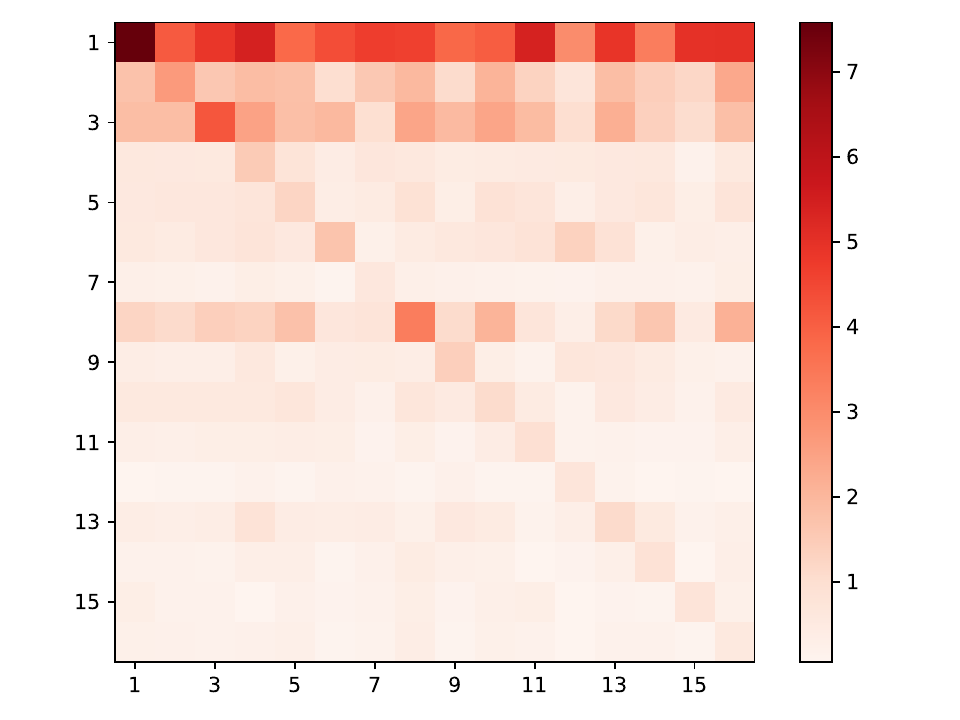}
    \subcaption[]{Heatmap of Amazon}
    \label{fig: HeatMap(Amazon)}
    \end{minipage}}
    \adjustbox{valign=b}{
    \begin{minipage}{0.24\textwidth}
    \includegraphics[width=\columnwidth]{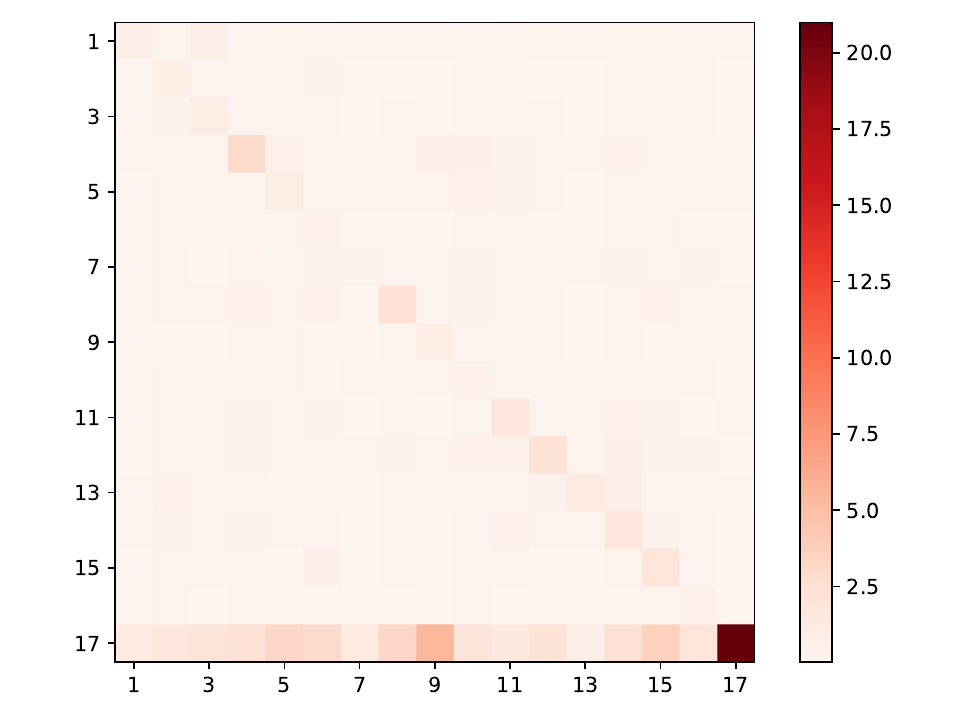}
    \subcaption[]{Heatmap of Taobao}
    \label{fig: HeatMap(Taobao)}
    \end{minipage}}
    \adjustbox{valign=b}{
    \begin{minipage}{0.24\textwidth}
    \includegraphics[width=\columnwidth]{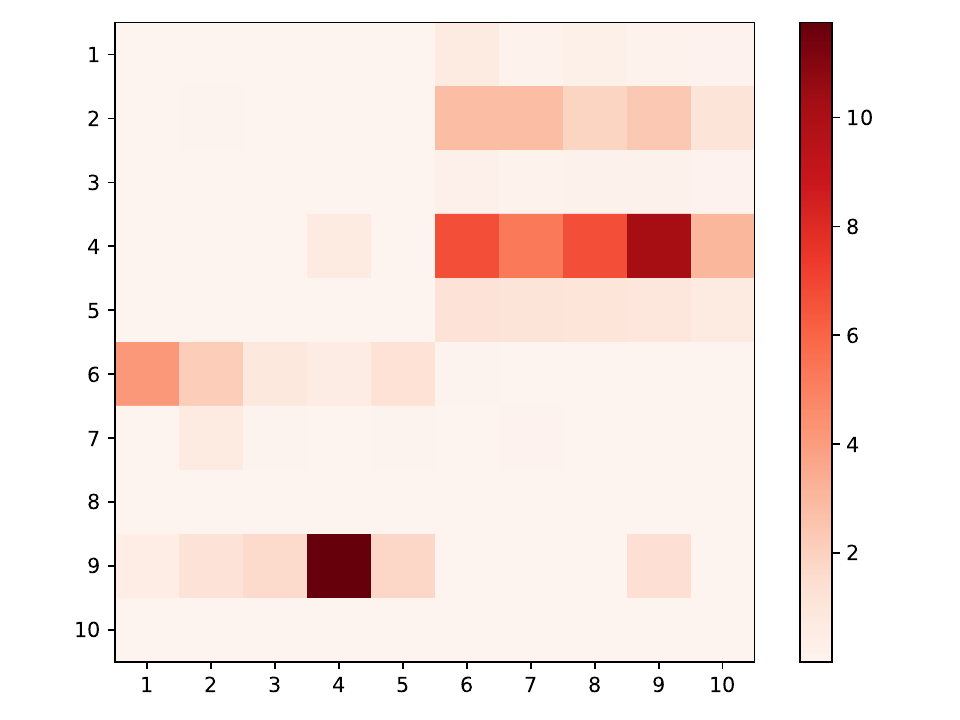}
    \subcaption[]{Heatmap of Taxi}
    \label{fig: HeatMap(Taxi)}
    \end{minipage}}
    \end{center}
    \caption{Experimental results with public data. (a) Comparative analysis of standardized TLL across models on five public datasets.
(b) Attention weight matrix for the first 200 events and grids in a StackOverflow sequence. 
Similar patterns to synthetic data were observed.
(c) Learned influence function $\hat{\phi}(\tau)$ among event types in StackOverflow, with influences generally decaying over time. 
(d) A statistical analysis on Amazon: the percentages of various event types (``Total''), the percentages of the next event being of the same type (``Same type follower''), and the percentages of the next event being of type 1 (``Type 1 follower''). 
(e)(f)(g)(h) Heatmaps of impact magnitudes between event types. Horizontal axis: source type, Vertical axis: target type. 
}
    \label{fig: real Experiment Result}
\end{figure*}

\subsubsection{Qualitative Analysis}

Our model can provide useful insights into the interaction among event types. To demonstrate this, we first quantify the magnitude of influence between event types. We compute $\int \phi_{ij}(\tau)d\tau$ for each influence function, representing the extent of influence from type $j$ (source type) to type $i$ (target type). Specifically, each learned $\int \phi_{ij}(\tau)d\tau$ is a scalar and can be demonstrated in heat maps. In this section, we analyse these datasets by looking into their learned heat maps: \cref{fig: HeatMap(SO),fig: HeatMap(Amazon),fig: HeatMap(Taobao),fig: HeatMap(Taxi)}.

\textbf{StackOverflow:} In this dataset, there are 22 event types related to ``badges'' awarded to users based on their actions. As depicted in \cref{fig: HeatMap(SO)}, many of these types have a strong positive influence on both type 4 (``Popular Question'') and type 9 (``Notable Question''). This observation aligns with the fact that ``Popular Question'' and ``Notable Question'' are the two most frequent events. Our model captures this trend and associates a significant positive impact from other types to them. 
Furthermore, a noticeable link between type 6 (``Nice Answer'', awarded when a user's answer first achieves a score of 10) and type 14 (``Enlighten'', given when a user's answer reaches a score of 10) is identified, which have nearly identical meanings. This mirrors the real-world progression from receiving a ``Nice Answer'' badge to later earning an ``Enlighten'' badge. This congruence demonstrates that our model accurately captures the dataset's characteristics and effectively highlights the interplay between different event types.

\textbf{Amazon, Taobao:} Both of these datasets pertain to customer behavior on shopping platforms and share some commonalities. Each event type represents a category of the browsing item (Taobao) or purchased item (Amazon), with Taobao having $K=16$ types and Amazon having $K=17$ types. The learned heat maps are presented in \cref{fig: HeatMap(Amazon),fig: HeatMap(Taobao)}. Interestingly, our model uncovers two common insights: 
(1) The dark diagonals observed indicate strong self-excitation for each type. This suggests customers tend to browse items of the same category consecutively in a short period. In Taobao and Amazon, with over 15 types in total, there are approximately $58.3\%$ and $21.4\%$ of events involving subsequent events of the same type. This behavior reflects how customers often browse items of the same category in a short period to decide which one to purchase. Additionally, Amazon's subscription purchases exemplify this pattern: vendors offer extra savings to customers who subscribe. These items are then regularly scheduled for delivery. 
(2) In \cref{fig: HeatMap(Amazon),fig: HeatMap(Taobao)}, rows 1 and 17 appear the darkest, indicating that these two types receive the most significant excitation from others. In reality, these two categories are the most prevalent in their respective datasets, implying that they should also have the highest intensity. What our model learns aligns empirically with the ground truth patterns in the datasets. 
Moreover, we conducted a statistical analysis on Amazon in \cref{fig: Amazon statistics}, calculating the percentages of various event types (``Total''), the percentages of the next event being of the same type (``Same type follower''), and the percentages of the next event being of type 1 (``Type 1 follower''). It is evident that the latter two constitute a significant portion ($\sim50\%$), indicating strong self-excitation effects and a pronounced exciting effect on type 1, which aligns with the learned heatmap in \cref{fig: HeatMap(Amazon)}. 

\textbf{Taxi:} In this dataset, there are 10 types of events representing taxi pick-up and drop-off across the five boroughs of New York City. Types 1-5 categorize ``drop-off'' actions, whereas types 6-10 correspond to ``pick-up'' actions in the respective boroughs. The learned heatmap (\cref{fig: HeatMap(Taxi)}) reveals three key insights:
(1) Among the ``drop-off'' actions (types 1-5), type 4 experiences the most significant influence from types 6-10 (``pick-up''). This aligns with the fact that type 4 (drop-off in Manhattan) is the most common drop-off event, accounting for over $40\%$ and thereby possessing the highest intensity.
(2) The ``pick-up'' and ``drop-off'' events always occur alternately. One driver can't pick up or drop off consecutively. As \cref{fig: HeatMap(Taxi)} shows, type 6-10 (``pick-up'') have much more excitation on type 1-5 (``drop-off'') rather than on themselves because a ``pick-up'' action will stimulate a consecutive ``drop-off'' action rather than another ``pick-up'' action. Likewise, type 1-5 have much less excitation on themselves.
(3) Type 9 and 4, pick-ups and drop-offs in Manhattan, display the most significant mutual influence, as indicated by the two darkest cells in \cref{fig: HeatMap(Taxi)}. This is consistent because most pick-up ($44.61\%$) and drop-off ($42.89\%$) actions occur in Manhattan. Furthermore, these two types always occur in tandem: $90.8\%$ of passengers picked up in Manhattan are also dropped off there, and $96.2\%$ of drivers who complete a trip in Manhattan will pick up their next customer within the same borough. This behavior is a clear short-term pattern captured by our model and is evident in the dataset. 
\subsection{Ablation Study}
\label{section: ablation study}
Our model has reduced the parameter count but still achieves comparable or even better results compared to THP. This improvement is attributed to the ``fully attention-based intensity function'' (\cref{section: Fully Attention-based Intensity Function}). THP relies on the parameterized extrapolated intensity, assuming that the intensity function on non-event intervals follows an approximately linear pattern (red term in \cref{thp_lamda}). However, such an assumption does not align with the actual patterns in real data and can impact the expressive capability of the model. 
We conduct further ablation studies to illustrate the limitations of the parameterized extrapolation method in \cref{table: Real Data Experiment}. 
We implement an extra revised model \textbf{Ex-ITHP} which essentially is ``interpretable Transformer'' + ``extrapolated intensity''. More details about Ex-ITHP is provided in \cref{appendix: section: implementation of Ex-ITHP}. 
THP, ITHP, and Ex-ITHP naturally constitute an ablation study. 
Ex-ITHP has fewer parameters as it removes the parameters $\mathbf{W}^Q$ and $\mathbf{W}^K$, and uses a less flexible extrapolated intensity. 
In \cref{table: Real Data Experiment}, the Ex-ITHP exhibits the poorest performance due to its fewer parameters and restricted intensity flexibility. THP performs moderately, having more parameters but still restricted intensity flexibility. Conversely, the ITHP, despite having fewer parameters, outperforms THP on most datasets owing to its more flexible intensity expression. 
\begin{figure}[t]
    \centering
    \includegraphics[width=0.8\columnwidth]{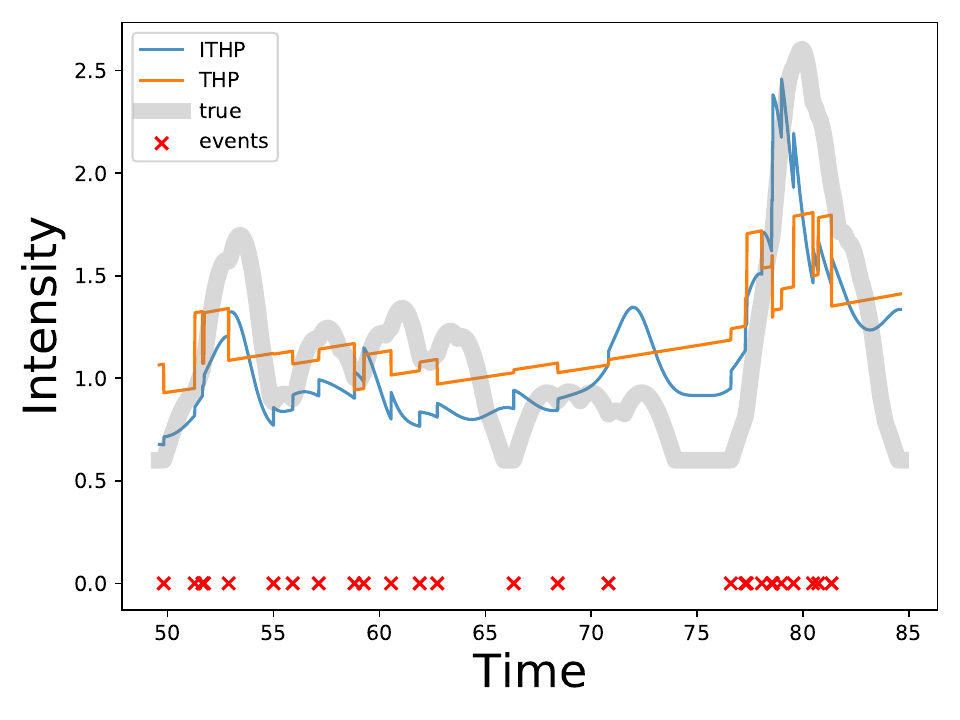}
    \vspace*{-3mm}
    \caption{Comparison of the learned intensity for a segment of a sequence by ITHP and THP. 
    THP fails to learn the fluctuating intensity on non-event intervals, but only maintains an approximately linear pattern due to the extrapolation assumption. 
    In contrast, our proposed ITHP demonstrates greater flexibility, successfully capturing the fluctuating intensity, and accurately fitting the scale level.}
    \label{fig: comparison between extrapolation and fully-attention}
    \vspace*{-5mm}
\end{figure}
Additionally, we visualize the difference between the learned fully attention-based intensity and the learned extrapolated intensity for a segment of the sequence in Half Sinusoidal Kernel Hawkes Synthetic dataset (\cref{section: Toy Data}). As depicted in \cref{fig: comparison between extrapolation and fully-attention}, 
on non-event intervals, THP, constrained by the approximately linear extrapolation, struggles to capture the fluctuating intensity patterns and can only learn an intensity that is approximately linear. Additionally, due to the limited variation in intensity on non-event intervals, large jumps are required when a new event occurs to maintain a height similar to the ground-truth intensity. 
In contrast, our proposed ITHP demonstrates greater flexibility, successfully capturing the fluctuating pattern on non-event intervals, and accurately fitting the scale level. 
\subsection{Hyperparameter Analysis}
\label{hyperparameter analysis}

Our model's configuration primarily encompasses two dimensions: the encoding dimension, denoted as $M$, and the Value dimension, denoted as $M_V$. 
We maintain the skip connection within the implementation of the encoder which necessitates that $M_V$ must be equal to $2M$.
We test the sensitivity of model performance to hyperparameters by using various hyperparameter configurations on one toy dataset and one public dataset: the half-sine and Taxi datasets. The results of our experiments are shown in~\cref{table: Hyper-Parameter Configurations}. The results indicate that our model is not significantly affected by the hyperparameter variation. Additionally, it can achieve reasonably good performance even with fewer parameters.

\begin{table}[h]
    \centering
    \caption{Experiments of different hyperparameters configurations on datasets: half-sine and Taxi. The results indicate that our model is robust to these hyperparameters.}
    \begin{sc}
    \begin{tabular}{ccccc}
        \toprule
        \multirow{2}{*}{Config} & \multicolumn{2}{c}{Taxi} & \multicolumn{2}{c}{Half-Sine}\\
         & TLL & ACC & TLL & ACC\\
        \midrule
        $M=64$, $M_V=128$ & 0.2513 & 0.97 & -0.7714 & 0.58\\
        \midrule
        $M=128$, $M_V=256$ & 0.2501 & 0.97 & -0.7909 & 0.58\\
        \midrule
        $M=256$, $M_V=512$ & 0.2520 & 0.97 & -0.7822& 0.58\\
        \midrule
        $M=512$, $M_V=1024$ & 0.2498 & 0.97 & -0.7852 & 0.59\\
        \bottomrule
    \end{tabular}
    \end{sc}
    \label{table: Hyper-Parameter Configurations}
\end{table}

\section{Conclusion}

To model interactions in social networks using event sequence data, we introduce ITHP as a novel approach to enhance the interpretability and expressive power of deep point processes model. 
Specifically, ITHP not only inherits the strengths of Transformer Hawkes processes but also aligns with statistical nonlinear Hawkes processes, offering practical insights into user or group interactions. It further enhances the flexibility of intensity functions over non-event intervals. 
Our experiments have demonstrated the effectiveness of ITHP in overcoming inherent limitations in existing deep point process models. Our findings open new avenues for research in understanding and modeling the complex dynamics of social ecosystems, ultimately contributing to the broader understanding of these intricate networks.

\section*{Acknowledgments}
This work was supported by NSFC Project (No. 62106121), the MOE Project of Key Research Institute of Humanities and Social Sciences (22JJD110001), and the Public Computing Cloud, Renmin University of China. 
\bibliographystyle{ACM-Reference-Format}
\bibliography{sample-base}
\appendix

\section*{Appendix}

\section{Toy Data}
\label{Appendix: Synthetic Data}
We simulate two synthetic datasets: the exponential decay Hawkes processes and the half sinusoidal Hawkes processes. In the case of the exponential decay Hawkes processes, we set the maximum observed length as $T=20$. For the half sinusoidal Hawkes processes, the maximum observed length is set to $T=100$. 
To perform simulation, we employ the thinning algorithm~\citep{ogata1998space} which is outlined in the algorithm below. Note that the kernel function $\phi_{mn}(\tau)$ is defined in~\cref{section: Toy Data}, where $m$ is the target type and $n$ is the source type. Note that $\phi_{m*}(\tau)$ indicates all kernels whose target type is $m$. The statistics of our toy datasets are listed in~\cref{table: toy data statistics}. 
Additionally, we present the attention weight matrix of a testing sequence in the half-sine toy data, as depicted in~\cref{fig: Attention Map(Half-Sine)}.

\begin{algorithm}
\caption{Simulation of an $M$-variate Hawkes processes with kernels $\phi_{mn}(\cdot)$ for $m,n = 1,2,\dots,M$ on $[0,T]$.}
\begin{algorithmic}
    \Require{\{$\mu_n$, $\phi_{mn}(\cdot)$\} for $m,n = 1,2,\dots,M$, the observation window $[0,T]$}
    \State Initialize $\mathcal{T}^1 = \dots = \mathcal{T}^M = \emptyset$, $n^1 = \dots = n^M = 0$, $s = 0$;
    \While{$s < T$}
        \State Set $\bar{\lambda} = \sum_{m=1}^M \lambda^m(s^{-}) + \text{max}(\phi_{m*}(\cdot))$;
        \State Generate $w \sim \text{exponential}(1/\bar{\lambda})$; 
        \State Set $s = s + w$; 
        \State Generate $D \sim \text{uniform}(0,1)$;
        \If{$D\bar{\lambda} \leq \sum_{m=1}^M \lambda^m(s) $}
            \State $k\sim\text{categorical}([\lambda^1(s),\ldots,\lambda^M(s)]/\bar{\lambda})$;
            \State $n^k = n^k + 1$; 
            \State $t^k_{n^k} = s$
            \State $\mathcal{T}^k=\mathcal{T}^k \cup \{t^k_{n^k}\}$
        \EndIf
    \EndWhile
    \If{$t^k_{n^k} \leq T$}
        \State \Return $\{ \mathcal{T}^m \}$ for $m = 1,2,\dots,M$
    \Else
        \State \Return $\mathcal{T}^1 \dots \mathcal{T}^k \slash \{t^k_{n^k}\} \dots \mathcal{T}^M$
    \EndIf 
\end{algorithmic}
\end{algorithm}

\begin{table*}[h]
    \centering
    \caption{The statistics of two synthetic datasets. Details such as the number of events, the sequence length and the event intervals are provided below.}
    \begin{sc}
    \scalebox{1.0}{
    \begin{tabular}{ccccccccc}
        \toprule
        \multirow{2}{*}{Dataset} & \multirow{2}{*}{Split} & \multirow{2}{*}{\# of Events} & \multicolumn{3}{c}{Sequence Length} & \multicolumn{3}{c}{Event Interval}\\
            &   &   & Max & Min & Mean(Std) & Max & Min &Mean(Std)\\
        \midrule
        \multirow{3}{*}{Exponential-Decay} & training & 70644 & 877 & 47 & 282.58 (150.25) & 21.86 & 1.91e-06 & 0.28 (150.25)\\
            & validation & 36521 & 877 & 47 & 292.17 (153.94) & 20.90 & 1.91e-06 & 0.27 (153.94)\\
            & test & 33716 & 894 & 59 & 269.73 (144.97) & 20.31 & 3.81e-06 & 0.29 (144.97)\\
        \midrule
        \multirow{3}{*}{Half-Sine} & training &95714 & 858 & 158 & 382.86 (96.29) & 21.21 & 3.83e-07 & 0.52 (96.29)\\
            & validation &48814 & 858 & 158 & 390.51 (106.12) & 21.21 & 3.83e-07 & 0.51 (106.12) \\
            & test& 50376 & 717 & 223 & 403.01 (102.19) & 19.57 & 5.48e-06 & 0.49 (102.19)\\
        \bottomrule
    \end{tabular}}
    \end{sc}
    \label{table: toy data statistics}
\end{table*}

\section{Public Data}
\label{Appendix: Public Data}
\subsection{Public Data Statistics}
In this section, we cover the main statistics of five public datasets: StackOverflow, Taobao, Amazon, Taxi and Conttime, which is listed in~\cref{table: Public statistics}. 
Note that all the public data are multivariate. The visualizations of the event percentages in each dataset are depicted in~\cref{fig: public statistics}. Each subplot in~\cref{fig: public statistics} displays the distribution of event types in the training, validation, and testing sets, respectively. 
\begin{table*}[ht]
    \centering
    \caption{The statistics of five public datasets. Details such as the number of events, the statistics of event interval are provided.}
    \begin{sc}
    \scalebox{1.0}{
    \begin{tabular}{ccccccccc}
        \toprule
        \multirow{2}{*}{Dataset} & \multirow{2}{*}{Split} & \multirow{2}{*}{\# of Events} & \multicolumn{3}{c}{Sequence Length} & \multicolumn{3}{c}{Event Interval}\\
            &   &   & Max & Min & Mean(Std) & Max & Min &Mean(Std)\\
        \midrule
        \multirow{3}{*}{STACKOVERFLOW} & training &90497 & 101 & 41 & 64.59 (20.46) & 20.34 & 1.22e-4 & 0.88 (20.46) \\
            & validation & 25313 & 101 & 41 & 63.12 (19.85) & 16.68 & 1.22e-4 & 0.90 (19.85)\\
            & test &26518 & 101 & 41 & 66.13(20.77) & 17.13 & 1.22e-4 & 0.85 (20.77) \\
        \midrule
        \multirow{3}{*}{TAOBAO} & training & 75205 & 64 & 40 & 57.85 (6.64) & 2.00 & 9.99e-05 & 0.22 (6.64)\\
            & validation &11497 & 64 & 40 & 57.49 (6.82) & 1.99 & 9.99e-05 & 0.22 (6.82) \\
            & test & 28455 & 64 & 32 & 56.91 (7.82) & 1.00 & 4.21e-06 & 0.05 (7.82)\\
        \midrule
                \multirow{3}{*}{AMAZON} & training &288377 & 94 & 14 & 44.68 (17.88) & 0.80 & 0.010 & 0.51 (17.88) \\
            & validation & 40088 & 94 & 15 & 43.48 (16.60) & 0.80 & 0.010 & 0.50 (16.60)\\
            & test & 84048 & 94 & 14 & 45.41 (18.19) & 0.80 & 0.010 & 0.51(18.19)\\
        \midrule
                \multirow{3}{*}{TAXI} & training & 51854 & 38 & 36 & 37.04 (1.00) & 5.72 & 2.78e-4 & 0.22 (1.00)\\
            & validation & 7422 & 38 & 36 & 37.11 (1.00) & 5.52 & 2.78e-4 & 0.22 (1.00)\\
            & test & 14820 & 38 & 36 & 37.05 (1.00) & 5.25 & 8.33e-4 & 0.22 (1.00)\\
        \midrule
                \multirow{3}{*}{CONTTIME} & training & 479467 & 100 & 20 & 59.93 (23.13) & 4.03 & 1.91e-06 & 0.24 (23.13)\\
            & validation & 60141 & 100 & 20 & 60.14 (22.97) & 3.94 & 2.86e-06 & 0.24 (22.97)\\
            & test & 61781 & 100 & 20 & 61.78 (23.21) & 4.47 & 9.54e-07 & 0.24 (23.21)\\
        \bottomrule
    \end{tabular}}
    \end{sc}
     \label{table: Public statistics}
\end{table*}

\begin{figure*}
    \begin{center}
    \adjustbox{valign=b}{
    \begin{minipage}{0.26\textwidth}
    \includegraphics[width=\columnwidth]{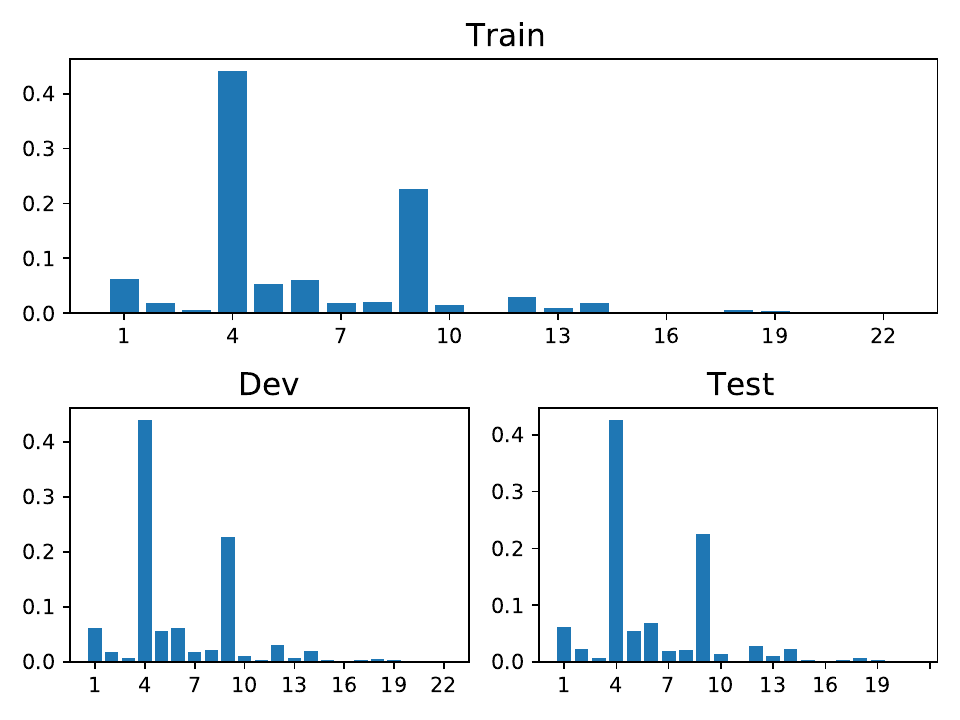}
    \subcaption[]{StackOverflow}
    \label{fig: StackOverflow Type Portion}
    \end{minipage}}
    \adjustbox{valign=b}{
    \begin{minipage}{0.26\textwidth}
    \includegraphics[width=\columnwidth]{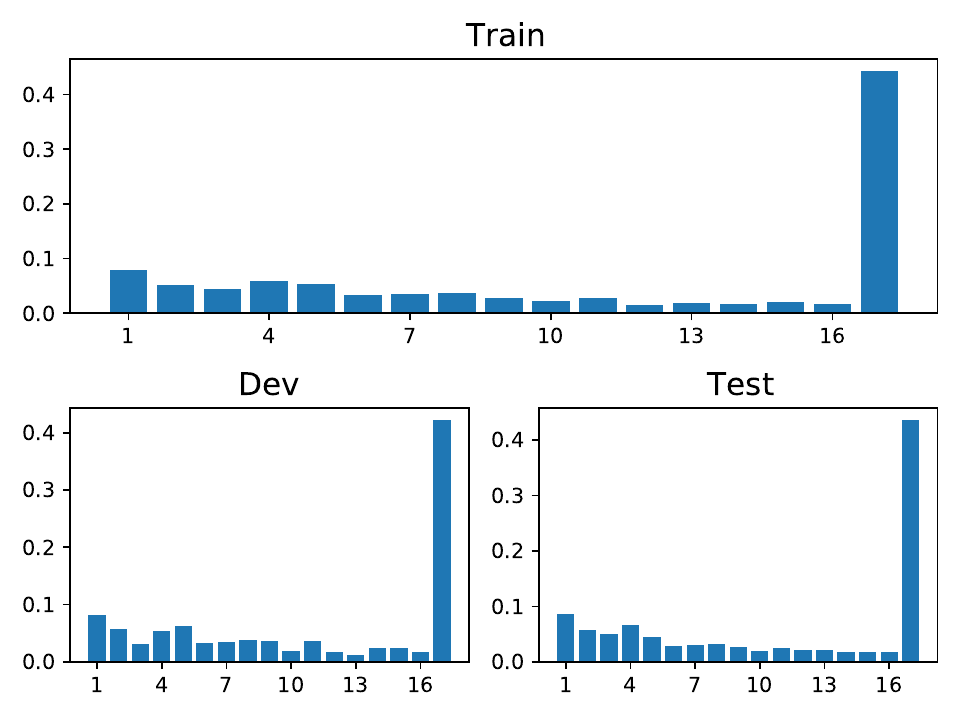}
    \subcaption[]{Taobao}
    \label{fig: Taobao Type Portion}
    \end{minipage}}
    \adjustbox{valign=b}{
    \begin{minipage}{0.26\textwidth}
    \includegraphics[width=\columnwidth]{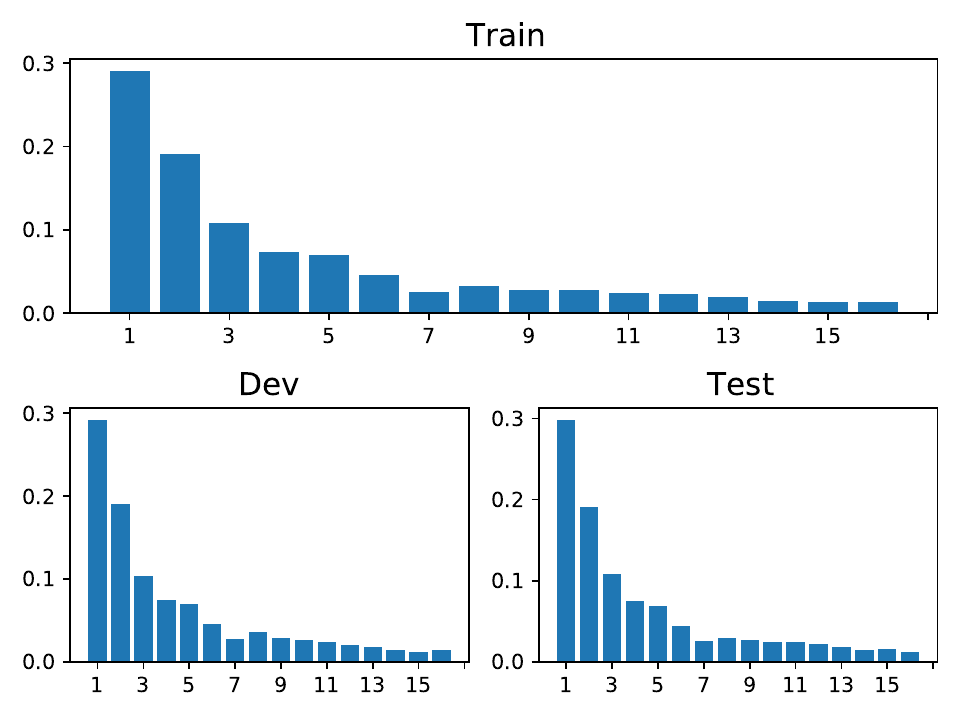}
    \subcaption[]{Amazon}
    \label{fig:amazon Type Portion}
    \end{minipage}}
    \adjustbox{valign=b}{
    \begin{minipage}{0.26\textwidth}
    \includegraphics[width=\columnwidth]{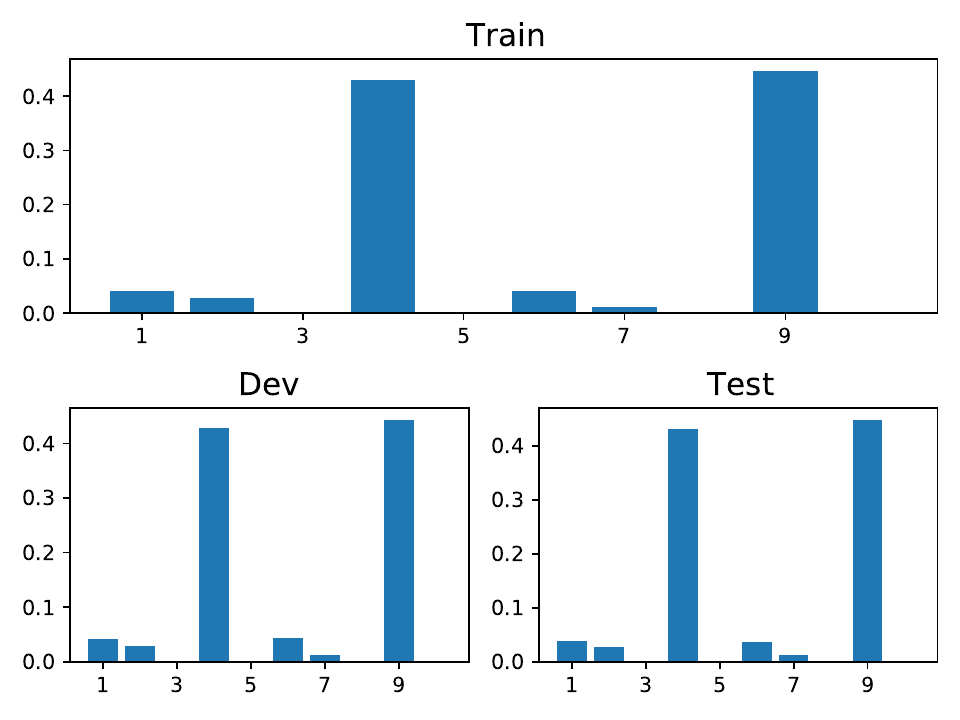}
    \subcaption[]{Taxi}
    \label{fig: Taxi Type Portion}
    \end{minipage}}
    \adjustbox{valign=b}{
    \begin{minipage}{0.26\textwidth}
    \includegraphics[width=\columnwidth]{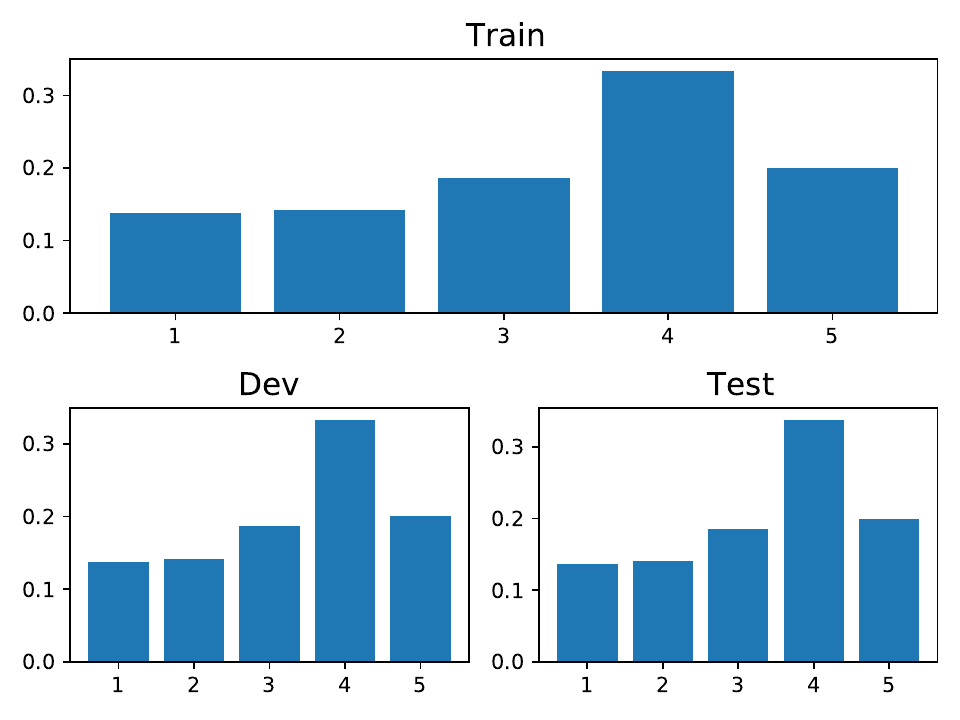}
    \subcaption[]{Conttime}
    \label{fig:conttime portion}
    \end{minipage}}
    \end{center}
    \caption{All the public datasets consist of multiple event types. Specifically, Amazon has $K=17$ event types, StackOverflow has $K=22$, Taxi has $K=10$, Taobao has $K=16$, and Conttime has $K=5$. We provide visualizations of the event percentages in each dataset. Each subplot illustrates the distribution of event types in the training, validation, and testing sets, respectively. Notably, there is a significant imbalance in event types observed in StackOverflow, Taxi, and Taobao datasets.}
    \label{fig: public statistics}
\end{figure*}

\subsection{Additional Attention Map}
We present additional attention weight matrices for the four public datasets, as depicted in~\cref{fig: Additional Attention Map}, which is more intuitive to demonstrate how events affect each other in the sequence. 
\begin{figure*}
    \begin{center}
    \adjustbox{valign=b}{
    \begin{minipage}{0.26\textwidth}
    \includegraphics[width=\columnwidth]{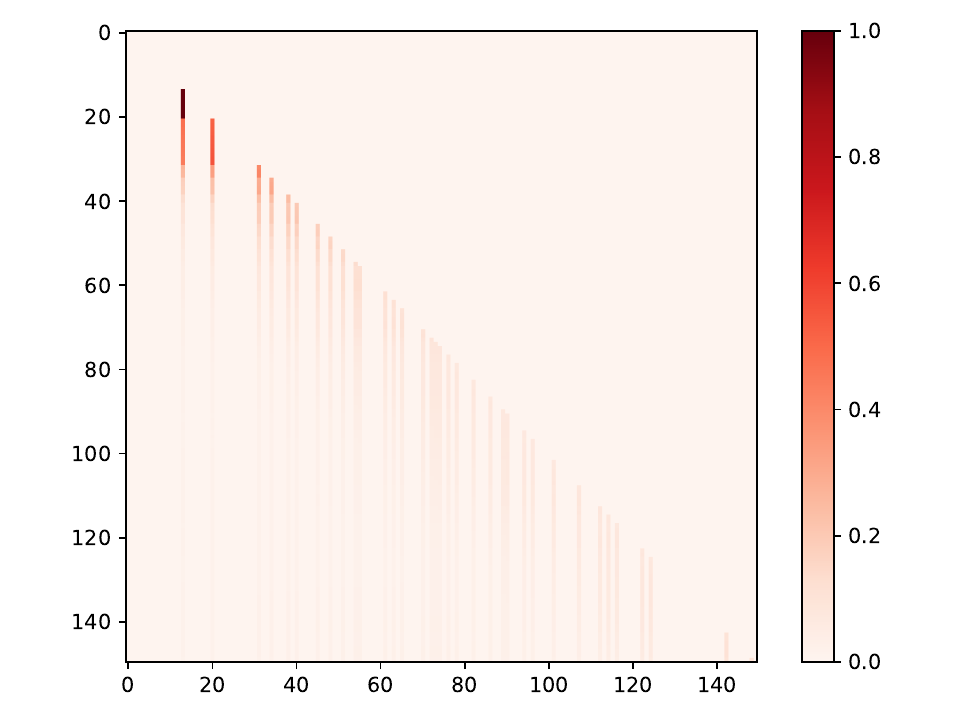}
    \subcaption[]{Half-sine}
    \label{fig: Attention Map(Half-Sine)}
    \end{minipage}}
    \adjustbox{valign=b}{
    \begin{minipage}{0.26\textwidth}
    \includegraphics[width=\columnwidth]{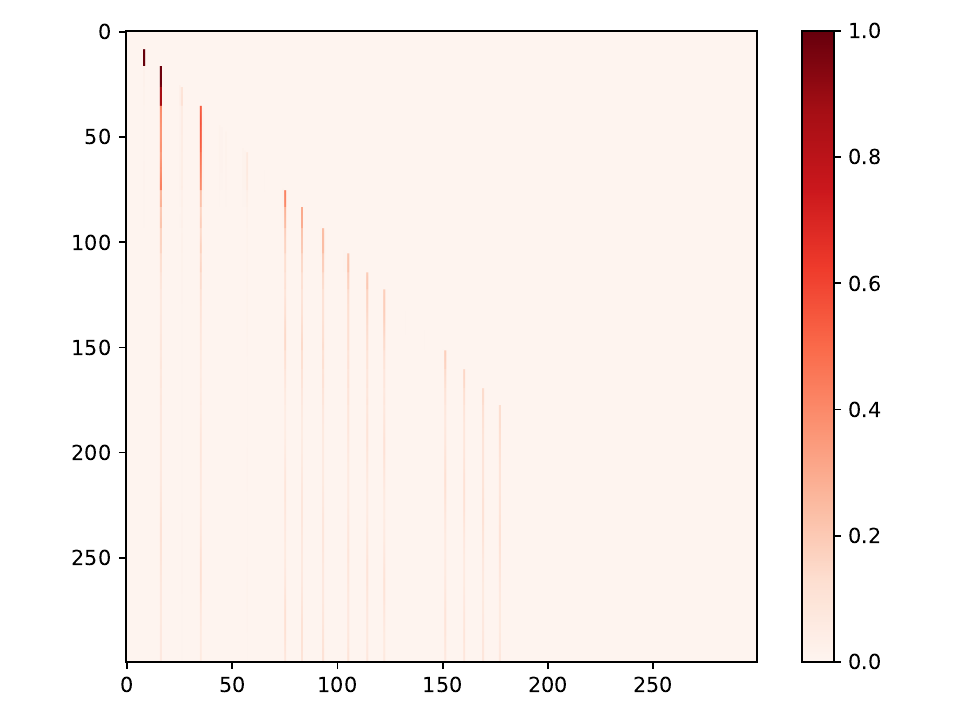}
    \subcaption[]{Taobao}
    \label{fig: Attention Map(Taobao)}
    \end{minipage}}
    \adjustbox{valign=b}{
    \begin{minipage}{0.26\textwidth}
    \includegraphics[width=\columnwidth]{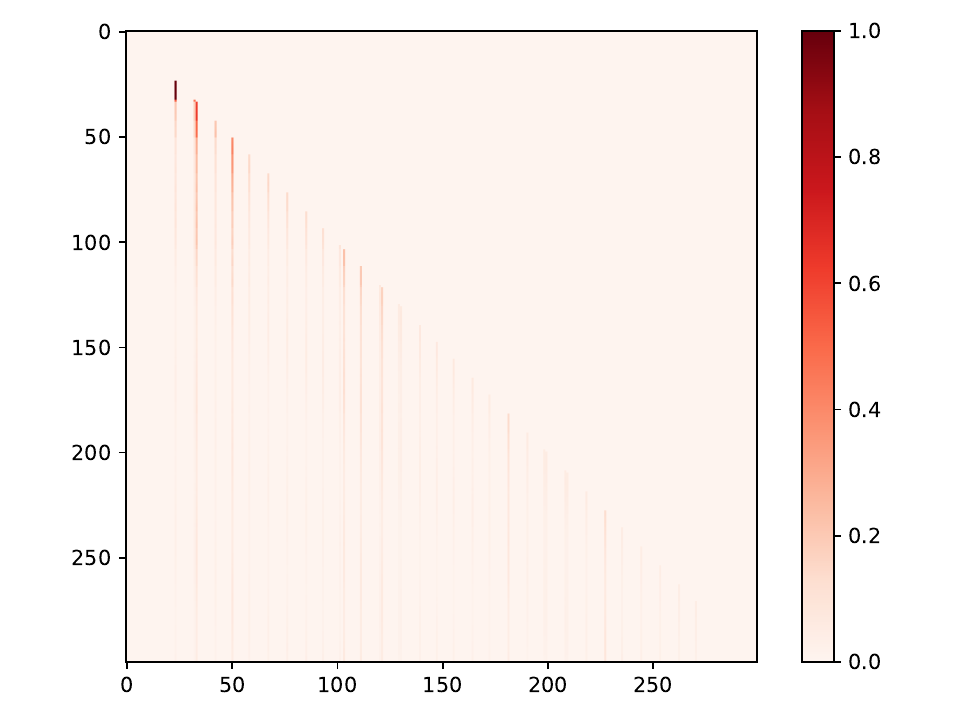}
    \subcaption[]{Amazon}
    \label{fig: Attention Map(AMAZON)}
    \end{minipage}}
    \adjustbox{valign=b}{
    \begin{minipage}{0.26\textwidth}
    \includegraphics[width=\columnwidth]{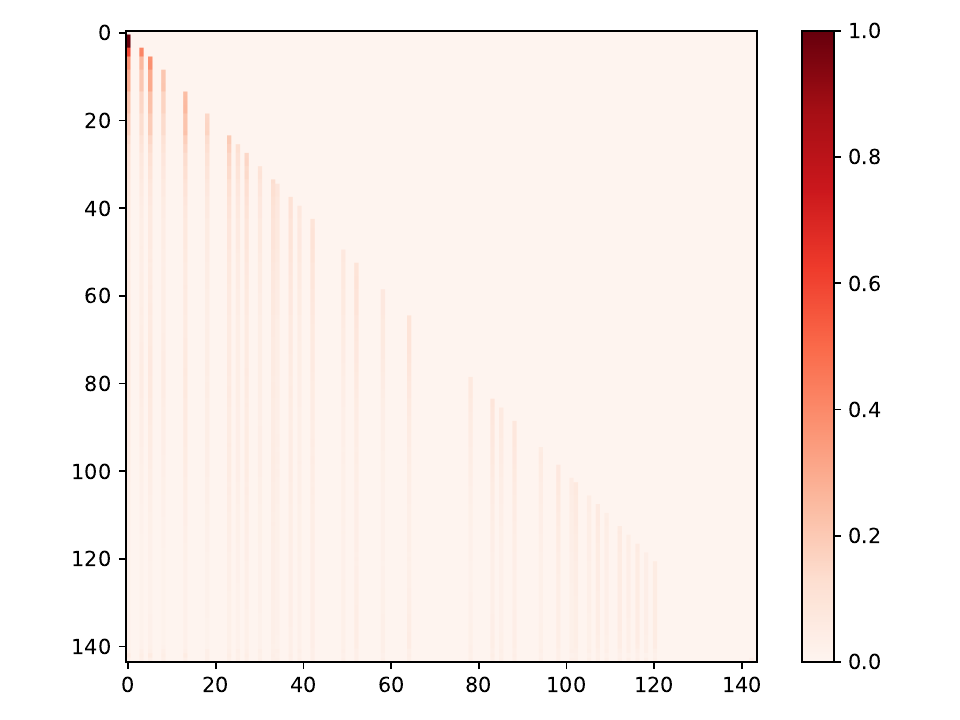}
    \subcaption[]{Taxi}
    \label{fig: Attention Map(Taxi)}
    \end{minipage}}
    \adjustbox{valign=b}{
    \begin{minipage}{0.26\textwidth}
    \includegraphics[width=\columnwidth]{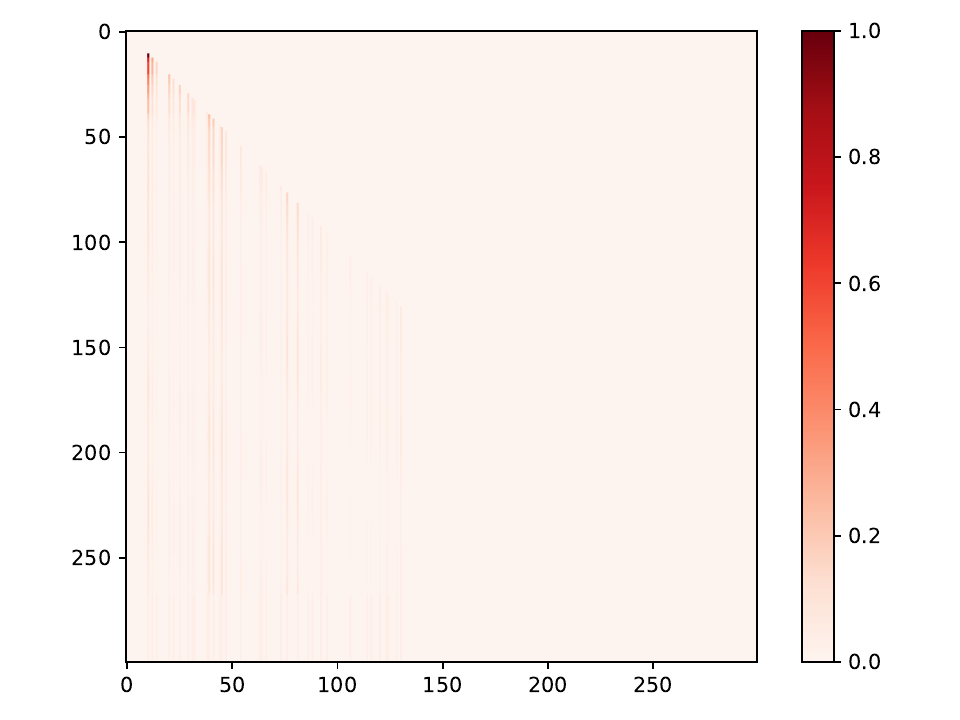}
    \subcaption[]{Conttime}
    \label{fig: Attention Map(CONTTIME)}
    \end{minipage}}
    \end{center}
    \caption{Each subplot demonstrates the attention map of a testing sequence in the half-sine toy data and the other four public datasets. Similarly, the horizontal axis represents the source point, while the vertical axis represents the target point. It is apparent that events have a lasting effect on subsequent events. Grids within non-event intervals do not exert any future influence, as they do not correspond to actual event occurrences. Each vertical rectangle signifies a single event's impact on various future events. The color fading from top to bottom within each rectangle indicates that the impact diminishes over time.}
    \label{fig: Additional Attention Map}
\end{figure*}

\section{Implementation of Ex-ITHP}
\label{appendix: section: implementation of Ex-ITHP}
The Ex-ITHP, namely, "extrapolation iTHP" is the iTHP(removing parameters $W^Q$, $W^K$) utilising 'extrapolation intensity'. In this section, we introduce the implementation of Ex-ITHP.
Given a sequence $\mathcal{S} = \{(t_i, k_i)\}_{i=1}^{L}$ where each event is characterized by a timestamp $t_i$ and an event type $k_i$, Ex-ITHP utilize the same temporal embedding and type embedding and concatenates them as iTHP does:

\begin{equation}
    \mathbf{X}=[\mathbf{Z},\mathbf{E}]\in\mathbb{R}^{L \times 2M}, 
\end{equation}

where $\mathbf{Z} \in \mathbb{R}^{L \times M}$ and $\mathbf{E} \in \mathbb{R}^{L \times M}$ are the temporal encoding and type encoding of $\mathcal{S}$. The encoder output $\mathbf{S}$ is calculated in the same way as~\cref{new_transformer_attention}. $\mathbf{S}_i$ indicates the $i$-th row of $\mathbf{S}$, which is the representation of event $i$:
\begin{equation}
    \mathbf{S}_i = \sum_{j<i}\underset{j<i}{\text{softmax}}\left(\frac{\mathbf{X}_i\mathbf{X}_j^\top}{\sqrt{2M}}\right)\mathbf{V}_j \in \mathbb{R}^{M_V}. 
\end{equation}
The encoder output $\mathbf{S}_i$ is then passed through a MLP to get the final representation of event $i$:
\begin{equation*}
\mathbf{H}_i = \text{ReLU}(\mathbf{S}_i\mathbf{W}_1 + \mathbf{b}_1)\mathbf{W}_2 + \mathbf{b}_2 \in \mathbb{R}^{M}.
\end{equation*}
Then, given a type $k$ and time t ($t_i<t \leq t_{i+1}$), the corresponding intensity is given by the extrapolation method:
\begin{equation}
\lambda_k(t|\mathcal{H}_t)=\text{softplus} (\alpha_k \frac{t-t_i}{t_i}+\mathbf{w}_k^\top\mathbf{H}_i+b_k).
\end{equation}
Finally, the log-likelihood to be optimized is given by:

\begin{equation}
\mathcal{L}(\mathcal{S})=\sum_{i=1}^L\log\lambda_{k_i}(t_i|\mathcal{H}_{t_i})-\sum_{k=1}^K\int_0^T\lambda_k(t|\mathcal{H}_t)dt, 
\end{equation}

In summary, Ex-ITHP employs identical encoding techniques, specifically temporal encoding and event type encoding, while also eliminating the use of $W^Q$ and $W^K$, akin to iTHP. However, it utilizes the extrapolation method to formulate the intensity as THP does.

\end{document}